\newcommand{\ie}{i.e.}
\newcommand{\eg}{e.g.}
\newcommand{\sthat}{\textrm{s.t.}}
\newtheorem{proposition}{Proposition}
\DeclareMathOperator{\tr}{tr}
\DeclareMathOperator{\Proj}{proj}
\newcommand\T{\rule{0pt}{3.2ex}}
\newcommand\B{\rule[-1.4ex]{0pt}{0pt}}
\newcommand{\nn}{n}
\newtheorem{definition}{Definition}
\definecolor{nred}{rgb}{0.9,0.2,0.2}
\definecolor{nblack}{rgb}{0,0,0}
\definecolor{nblue}{rgb}{0.2,0.2,0.7}
\definecolor{ngreen}{rgb}{0.2,0.6,0.2}
\DeclareMathSymbol{\fixwidehatsym}{\mathord}{largesymbols}{"62}
\newcommand\lowerwidehatsym{%
  \text{\smash{\raisebox{-1.3ex}{%
    $\fixwidehatsym$}}}}
\newcommand\fixwidehat[1]{%
  \mathchoice
    {\accentset{\displaystyle\lowerwidehatsym}{#1}}
    {\accentset{\textstyle\lowerwidehatsym}{#1}}
    {\accentset{\scriptstyle\lowerwidehatsym}{#1}}
    {\accentset{\scriptscriptstyle\lowerwidehatsym}{#1}}
}
\begin{document}
\title{Device-independent entanglement quantification and related applications}

\author{Tobias~Moroder}
\affiliation{Naturwissenschaftlich-Technische Fakult\"at, Universit\"at Siegen, Walter-Flex-Str.~3, D-57068 Siegen, Germany}
\author{Jean-Daniel~Bancal}
\affiliation{Group of Applied Physics, University of Geneva, CH-1211 Geneva, Switzerland}
\affiliation{Center for Quantum Technologies, National University of Singapore, 3 Science Drive 2, Singapore 117543}
\author{Yeong-Cherng~Liang}
\affiliation{Group of Applied Physics, University of Geneva, CH-1211 Geneva, Switzerland}
\author{Martin~Hofmann}
\affiliation{Naturwissenschaftlich-Technische Fakult\"at, Universit\"at Siegen, Walter-Flex-Str.~3, D-57068 Siegen, Germany}
\author{Otfried~G\"uhne}
\affiliation{Naturwissenschaftlich-Technische Fakult\"at, Universit\"at Siegen, Walter-Flex-Str.~3, D-57068 Siegen, Germany}

\begin{abstract}
We present a general method to quantify both bipartite and multipartite entanglement in a device-independent manner, meaning that we put a lower bound on the amount of entanglement present in a system based on observed data only but independently of any quantum description of the employed devices. Some of the bounds we obtain, such as for the Clauser-Horne-Shimony-Holt Bell inequality or the Svetlichny inequality, are shown to be tight. Besides, device-independent entanglement quantification can serve as a basis for numerous tasks. We show in particular that our method provides a rigorous way to construct dimension witnesses, gives new insights into the question whether bound entangled states can violate a Bell inequality, and can be used to construct device independent entanglement witnesses involving an arbitrary number of parties.
\end{abstract}

\pacs{03.67.Mn, 03.65.Ud, 03.67.a}

\maketitle

\textit{Introduction.}---Entanglement, undoubtfully the most precious resource of quantum mechanics, has been routinely quantified in many experiments. However, such entanglement statements are generally only valid when a precise quantum description of the employed equipment is available~\cite{Rosset12}. In many contexts, such a quantum model is not available, in particular for complex biological or condensed matter systems, where one still disputes about the underlying quantum processes or is unsure about the appropriate description of measurements~\cite{tiersch12a,cramer11a}. In this case, one can still try to quantify entanglement \emph{exclusively} from the observed classical measurement data, thus independent of any quantum functionality of the interested system. While this may seem impossible at first sight, such methodology is precisely the working principle behind the emergent field of \emph{device-independent} quantum information processing, which started in quantum key distribution~\cite{ekert91a,acin06a} and device testing~\cite{mayers04a,magniez05a}. However, while it is long known that Bell inequality violations~\cite{bell64} verify entanglement~\cite{werner89a}, no precise bound on the amount of entanglement is known in the device-independent setting, presumably because non-locality and entanglement are different resources~\cite{method07a}. Even with a qubit assumption, quantification has so far only been achieved for the simplest experimental scenario~\cite{Verstraete02,liang11a}.

In this paper we present a general framework for various device-independent tasks, notably the quantification of bi-  and multipartite entanglement using solely the observed classical data. Incidentally, this provides further results on seemingly unrelated questions in quantum information: First it certifies a necessary minimal dimension of the underlying quantum system and thus provides a rigorous and systematic construction of dimension witnesses~\cite{brunner08a}. Second, using the negativity~\cite{vidal02a} as our primary entanglement measure, we obtain new results for the long-standing Peres conjecture~\cite{peresconjecture}, which states that no bound entangled state can violate a Bell inequality. We show that a Bell violation of any known bipartite bound entangled state, or more precisely, any entangled state with a positive partial transpose (PPT), can at most be very small, if not vanishing, for the simplest classes of Bell inequalities, thus providing circumstantial evidence in favor of this conjecture in the bipartite case. Finally, in the multipartite case our framework additionally facilitates---without resorting to the detection of genuine multipartite nonlocality~\cite{bancal11a}---the construction of device-independent entanglement witnesses (DIEW) for genuine multipartite entanglement~\cite{bancal11a,pal11a,Bancal12a}.

\textit{Problem definition.}---Let us start by considering a bipartite Bell-type experiment where each party can employ different measurement settings $x,y$ with respective outcomes $a,b$ that are sampled from the conditional probability distribution $P(a,b|x,y)$. These data have a quantum representation if there exists a quantum state $\rho_{AB}$ and local measurement operators $M_{a|x}$, $M_{b|y}$ such that $P(a,b|x,y)=\tr( \rho_{AB} M_{a|x} \otimes M_{b|y})$. In the device-independent paradigm one tries to draw conclusions about $\rho_{AB}$ directly from $P(a,b|x,y)$ without assuming any knowledge of the performed measurements or of the dimension of the underlying state. In order to do so one needs a characterization at the level of $P(a,b|x,y)$ assuming that $\rho_{AB}$ satisfies certain properties. If $\rho_{AB}$ is only required to be a quantum state, we recover the original question leading to Tsirelson's bounds~\cite{tsirelson80a,wehner06a,navascues07a,navascues08a,doherty08a}. But one can demand $\rho_{AB}$ to fulfill extra constraints, such as being PPT~\cite{peres96a}, or---with our primary goal in mind---that its entanglement is bounded. This characterization task generalizes naturally to the multipartite case, \eg, to describe if the tripartite distribution $P(a,b,c|x,y,z)$ is quantum, biseparable~\cite{duer99a,bancal11a}, originates from a PPT mixture~\cite{jungnitsch11a} or has some bounded amount of entanglement.

Our method is a superset characterization, similar to the converging hierarchy proposed by Navascu\'es-Pironio-Ac\'in (NPA)~\cite{navascues07a,navascues08a,doherty08a}. For instance, in the bipartite case we show that a distribution $P=P(a,b|x,y)$ can only originate from a PPT state if a special matrix $\chi[P,u]$, that linearly depends on $P$ and on some unknowns $u$, satisfies $\chi[P, u]\geq 0$ and $\chi[P,u]^{T_A}\geq 0$. If it is impossible to find such parameters $ u$, then $P$ has no PPT quantum representation. The novel observation which enables us to go beyond NPA is that if one organizes the matrix entries of NPA carefully, the resulting matrix $\chi$ can be interpreted as the result of local maps acting on the underlying quantum state. Then this matrix has a clear bipartite structure.

We emphasize that in quantifying entanglement or in characterizing correlations due to extra properties of the quantum state, we need statements that hold for all possible dimensions, measurements and states with the desired property. However, since any measurement operator corresponds to a projector in higher dimensions, we can assume without loss of generality the projection property, i.e., the relation $M_{a|x}M_{a^\prime|x} = \delta_{aa^\prime} M_{a|x}$ for the operators $M_{a|x}$ on system $A$ for all $x$, $a$ and $a'$. This follows from Naimark's extension~\cite{Peres:1990fk} which preserves any entanglement monotone. Also, we shall simultaneously employ the notations $M_{a|x}$ and $A_i$ for measurement operators on system $A$, likewise for other systems. The set $\left\{ A_i\right\}$ contains the identity operator $A_0=\mathbbm{1}$ and all but one measurement operator $M_{a|x}$ for each setting. Hence one has the aforementioned projection property and an identity relation $A_i A_0 = A_0 A_i = A_i$ for all $i$.

\textit{Technique.}---To solve to desired characterization problem, we employ results obtained in the studies of matrix of moments for continuous variable systems~\cite{shchukin05a,rigas06a,haeseler08a,miranowicz09a,killoran11a} and  in the device-independent analysis~\cite{wehner06a,navascues07a,navascues08a,doherty08a}.

Let us start with the matrix of moments for the bipartite case and consider first the scenario where the state $\rho_{AB}$ and measurement operators $M_{a|x}, M_{b|y}$ are known. To this scenario we associate two completely positive (CP) \emph{local} maps $\Lambda_A,\Lambda_B$ that we apply to the quantum state $\chi[\rho] = \chi[\rho_{AB}]_{\bar A \bar B} = \Lambda_A \otimes \Lambda_B[\rho_{AB}]$. Here $\bar A$ and $\bar B$ denote the respective output spaces. Specifically, consider the local map $\Lambda_A[\rho] = \sum K_n\, \rho\, K_n^\dag$ where the Kraus operators are given by $K_n = \sum_i \ket{i}_{\bar A \: A\!}\bra{n}A_i$, and $\ket{n}_A$, $\ket{i}_{\bar A}$ are orthogonal basis states of $\mathcal{H}_A$ and $\mathcal{H}_{\bar A}$ respectively. Using a similar map for $B$ one obtains 
\begin{equation}
\label{eq:chi}
\chi[\rho] = \sum_{ijkl} \ket{ij}_{\bar A \bar B}\bra{kl} \tr[\rho_{AB} A_k^\dag A_i \otimes B_l^\dag B_j].
\end{equation}
Thus the matrix $\chi[\rho]$ is just a matrix of certain expectation values. Since the local maps can also be defined using higher moments, \eg, by choosing Kraus operators $K_n = \sum_{i_1,\dots, i_{\ell}} \ket{i_1,\dots,i_\ell}_{\bar A \: A\!}\bra{n}A_{i_1}A_{i_2}\dots A_{i_\ell}$, we shall refer to $\chi$ as a moment matrix of level $\ell$ if it contains all $\ell$-fold products of $A_i$. Since both sets $\{A_i\},\{B_j\}$ contain the identity, the trace of the underlying state is a matrix entry that we refer to as $\chi[\rho]_{\tr}=\tr[\rho]$. Finally, note that by the structure of these local maps we have a couple of important relations, e.g.: i) if $\rho \geq 0$ then $\chi[\rho] \geq 0$, ii) if ${\rho^{T_A} \geq 0}$ then $\chi[\rho]^{T_{\bar A}} \geq 0$, and iii) if $\rho$ separable then $\chi[\rho]$ separable. This matrix of moment approach can analogously be defined in the multipartite case.

A device-independent characterization draws conclusion only from the observed correlations, hence, many of the entries of $\chi$ are unknown a priori. However, even without this information the matrix $\chi[\rho]$ has a structure which follows from known relations that hold independently of state and measurements: $1$) $A_i, B_j$ are Hermitian operators, $2$) $A_i,B_j$ satisfies the above mentioned projection property and the identity relation, $3$) certain entries correspond to the observations $P(a,b|x,y)=\tr(\rho_{AB} M_{a|x} \otimes M_{b|y})$. 

Via this partial information we can decompose without loss of generality each matrix of moments $\chi[\rho]$ as
\begin{align}
\nonumber
\chi[\rho] &= \chi[P,u] = \chi^{\rm fix}(P) + \chi^{\rm open}(u)\\
\label{eq:chi_sdp}
&= \sum_{a,b,x,y} P(a,b|x,y) F_{abxy} + \sum_v u_v F_v,
\end{align}
i.e., into one fixed part that linearly depends on the observed data $\chi^{\rm fix}(P)=\sum P(a,b|x,y) F_{abxy}$ and into an orthogonal, open part $\chi^{\rm open}(u)=\sum_v u_v F_v$ which would be known only by the knowledge of state and measurements. Here all operators $F=F^\dag$ are Hermitian. Note that the constraint $\chi[\rho]_{\tr} = \chi[P,u]_{\rm tr} = 1$ is fulfilled automatically if the probabilities $P$ are normalized. We give an example how the relations $1)-3)$ provide the form given by Eq.~(\ref{eq:chi_sdp}) in the appendix.

\textit{Connection with the NPA hierarchy.}---At this point we like to connect the present technique to that of NPA~\cite{navascues07a,navascues08a}, the best known method to characterize quantum correlations. For their method, one can identify a likewise construction $\chi^{\rm NPA}[\rho] = \Lambda[\rho_{AB}]$, but with $\Lambda$ being a \emph{global} CP map which already ensures that if $\rho\geq 0$ then $\chi^{\rm NPA}[\rho] \geq0$. If one uses the operator-sum ansatz $\chi^{\rm NPA}[\rho] = \sum_m L_m \rho\, L_m^\dag$ where $L_m = \sum_s \ket{s}\bra{m}O_s$ with $\ket{m},\ket{s}$ being respective basis states for the bipartite in- and output Hilbert spaces, this leads to $\chi^{\rm NPA}[\rho] = \sum \ket{s}\bra{t} \tr[\rho_{AB} O_t^\dag O_s]$. If this operator set $\{O_s\}$ consists of all $\ell$-fold products of measurement operators, then imposing the constraint $\chi^{\rm NPA}[\rho]\geq 0$ corresponds to the $\ell^{\rm NPA}$-th step in their hierarchy.

Therefore a bipartite moment matrix $\chi$ of level $\ell$ as defined above and a $2\ell$-step $\chi^{\rm NPA}$ only differ in the ordering of the expectation values and in that certain moments of $\chi^{\rm NPA}$ are not included in $\chi$. These similarities are important to relate results about the NPA method $\chi^{\rm NPA}$ to the modified moment matrix $\chi$. However, let us stress that $\chi^{\rm NPA}$ does not generally admit a bipartite structure.

\textit{Applications of technique.}---Given the close connection between the present technique and that of NPA, it is clear that ours can also be used to characterize the set of quantum correlations and hence to compute Tsirelson bounds, \ie, extremal quantum values of a Bell inequality. For instance, for any fixed level $\ell$  and any given Bell expression $I\cdot P=\sum I_{abxy} P(a,b|x,y)$, an upper bound to Tsirelson bound can be obtained by solving $\max \{ I\cdot P\, |\, \chi[\rho]= \chi[P,u] \geq 0\}$ as a semidefinite program~\cite{cobook}. Henceforth, let us focus on the novel applications that stem from the current technique.

In comparison with NPA the advantage of the additional bipartite structure $\chi=\chi_{\bar A \bar B}$ is that one can now easily incorporate further constraints. For instance, one could ask for a similar Tsirelson bound if the underlying state is PPT by including the constraint $\chi[\rho]^{T_{\bar A}} \geq 0$, 
\begin{eqnarray}
\label{eq:ppt_tsirelson}
	\underset{P,u}{\max} && \: I \cdot P \\ \nonumber
	\sthat && \: \chi[\rho]=\chi[P,u] \geq 0, \; \chi[\rho]^{T_{\bar A}} \geq 0,\; \chi[\rho]_{\tr}=1.
\end{eqnarray} 
By this method one obtains an upper bound to the true PPT Tsirelson bound, which converges to the related commutative bound in the limit of large levels $\ell$, see appendix for details.
 
Next, let us show how to estimate the negativity~\cite{vidal02a}, defined via the sum of negative eigenvalues $\lambda_i$ of the partially transposed state as $N[\rho_{AB}] = \sum_{\lambda_i<0} |\lambda_i(\rho_{AB}^{T_A})|$. In the following we employ its variational form which reads as $N[\rho_{AB}]\! = \!\min \{ \tr[\sigma_-]| \rho_{AB}\! = \sigma_+\! -\! \sigma_-, \sigma_\pm^{T_A}\geq0 \}$. Using the properties of the moment matrix,  one can readily optimize over a larger set: The constraint $\rho = \sigma_+ - \sigma_-$ is relaxed by $\chi[\rho] = \chi[\sigma_+] - \chi[\sigma_-]$, while $\sigma_\pm^{T_A} \geq 0$ translates to $\chi[\sigma_\pm]^{T_{\bar A}} \geq 0$. 
If one observes a certain violation of a Bell inequality $I \cdot P = v$, a lower bound on the negativity of $\rho_{AB}$ compatible with this observation is given by 
\begin{eqnarray}
\label{eq:dvi_negativity}
\underset{P,u,P_\pm,u_\pm}{\min} && \!\!\!\!\!\!\: \chi[\sigma_-]_{\tr}\\
\nonumber
\sthat \:\:\:\: && \!\!\!\!\!\!\!\: \chi[\rho] = \chi[P,u] = \chi[\sigma_+] - \chi[\sigma_-] \geq 0,\: \chi[\rho]_{\rm tr} = 1,\\
\nonumber
&& \!\!\!\!\!\!\!\: \chi[\sigma_\pm]^{T_{\bar A}}  = \chi[P_\pm,u_\pm]^{T_{\bar A}} \geq 0,\:I \cdot P = v.
\end{eqnarray}

Furthermore, since the negativity of any $\mathbb{C}^{d}\otimes \mathbb{C}^{D}$ state is at most $N_{\rm max}^d=(d-1)/2$ (for $d\leq D$), a lower bound on the negativity certifies also a minimal state space dimension. The bound of a dimension witness~\cite{brunner08a}, i.e., the maximal value of a Bell inequality for states with minimal local dimension upper bounded by $d$, can be constructed by an optimization  analogous to Eq.~\eqref{eq:dvi_negativity} but with the expression $I\cdot P$ now appearing in the objective function, while the dimension restriction is enforced by the constraint $\chi[\sigma_-]_{\rm tr} \leq N_{\rm max}^d$. 

At this point we like to stress that these optimization problems  admit a natural generalization to the multipartite scenario using PPT mixtures (which include biseparable states) and the genuine negativity as a measure for genuine multiparticle entanglement~\cite{jungnitsch11a}. Further details and the explicit programs are given in the appendix.

\textit{Example I: CHSH}---Let us start with the Clauser-Horne-Shimony-Holt (CHSH) inequality~\cite{clauser69a}, where each party has two possible settings $x,y\in\{1,2\}$ yielding binary outcomes $a,b$. Using correlation terms $\braket{X_xY_y}=P(a=b|x,y)-P(a\not=b|x,y)$, the inequality 
$I_{\rm CHSH}=\braket{X_1Y_1} + \braket{X_1Y_2} + \braket{X_2Y_1} - \braket{X_2Y_2} \leq 2
$
holds for any local hidden-variable model (LHV), while quantum mechanics allows a maximum of $I_{\rm CHSH}^{\rm max} = 2\sqrt{2}$. Since every separable state fulfills the LHV bound~\cite{werner89a}, any violation $I_{\rm CHSH} > 2$ signals entanglement of the underlying quantum state $\rho_{AB}$. By solving Eq.~\eqref{eq:dvi_negativity} we can now provide a quantitative statement in terms of the minimal negativity that the underlying state $\rho_{AB}$ must possess. Specifically, the numerical result leads to the sharp bound \begin{equation}
N[\rho_{AB} \left| I_{\rm CHSH}=v\right.]\ge (v-2)/(4\sqrt{2}-4).
\end{equation}
The resulting plot and a more detailed discussion, also about the other examples, can be found in the appendix. Note that this recovers the known result that PPT states must necessarily satisfy the CHSH inequality~\cite{werner00a}.

\begin{figure}[Ht]\center
\includegraphics[angle=-90,scale=0.35]{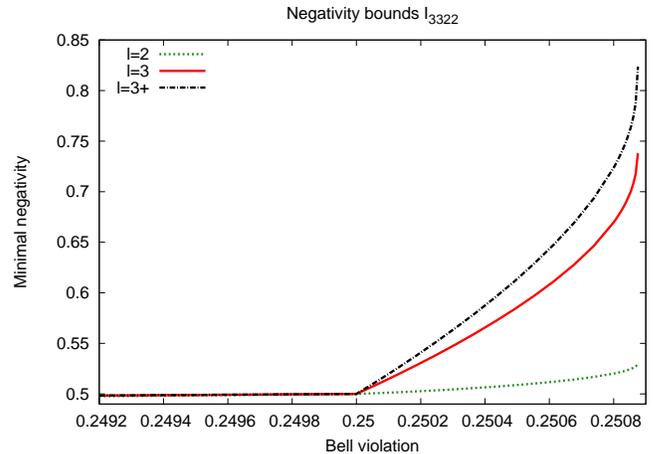} 
\caption{Negativity bounds for violations close to the maximum of the Bell inequality $I_{3322}$~\cite{sliwa03a,collins04a} obtained by solving Eq.~(\ref{eq:dvi_negativity}) for different levels of the moment matrix. Note that violations with $v>0.25$ require a negativity of $N[\rho_{AB}]>1/2$ and thus at least a two-qutrit state.}\label{fig:bell2negativity}
\end{figure}

\textit{Example II: Dimension witness}---As a second example, we consider the Bell inequality $I_{3322}\leq 0$ ~\cite{sliwa03a,collins04a} where each party can perform three possible dichotomic measurements as indicated by the subscripts. For a violation of $0\le v\le 0.25$, the numerical solution of Eq.~\eqref{eq:dvi_negativity} gives $N[\rho_{AB}\left| I_{3322}=v\right.]\geq 2v$ and a two-qubit Bell state can indeed reach a violation of $I_{3322}=0.25$~\cite{collins04a}. However the maximum possible quantum violation is given by $I_{3322}^{\rm max} \lesssim 0.25088$ and there exist infinite-dimensional states which can asymptotically reach this value~\cite{pal10a}. From Fig.~\ref{fig:bell2negativity}, we see more closely that if $I_{3322}>0.25$ the negativity bound satisfies $N[\rho_{AB}]>1/2$, which is achievable only with local Hilbert space dimension $d\ge 3$. Hence,  $I_{3322}\le 0.25$ serves as a dimension witness for qutrits. 
In a similar way we investigated the very first Bell inequality used as a dimension witness~\cite{brunner08a}, namely, $I_{2233}\leq 0$~\cite{collins04a,kaslikowski02,collins02a}, and confirm that violations larger than $v=1/\sqrt{2}-1/2 \approx 0.2071$ require at least qutrits---this certifies the heuristic qubit bound of $I_{2233}$~\cite{brunner08a}. 

\textit{Example III: PPT Tsirelson bound}---As a third example of the application of our techniques, we have computed upper bound on the PPT Tsirelson bound for the above Bell inequalities and $175$ facet-defining Bell inequalities involving four dichotomic measurement settings per party~\cite{Brunner20083162,Pal09,TVpv}. Interestingly, our results show that for the majority of these inequalities, the maximal quantum violation allowed by all PPT entangled states is vanishing within numerical precision, hence unable to provide a counterexample to the bipartite Peres conjecture, cf.  Tab.~\ref{tab:PPTtsireslon} and the appendix for more details.

\begin{table}[h!bp]\center
\begin{tabular}{cccccc}
Bell inequality &\phantom{sp}& PPT Tsirelson bound  &\phantom{sp}& $\ell$ & Matrix size \\
\hline \hline
\T $I_{\rm chsh} \leq 2$~\cite{clauser69a} & &$2$ & &1 &$3^2\times 3^2$\\
\T $I_{3322} \leq 0$~\cite{collins04a} & &$0$ & &2 & $10^2\times10^2$\\
\T $I_{2233} \leq 0$~\cite{collins04a} & &$\leq 1.3559\times 10^{-3}$ & &$3+$ & $45^2\times45^2$\\ 
\T $A_{6} \leq 0 $~\cite{Brunner20083162} & &$\leq  7.6754\times 10^{-6}$ & &$2+$ & $31^2\times31^2$\\
\T\B $I_{4422}^3 \leq 0 $~\cite{Brunner20083162} & &$\leq  2.8531\times10^{-4} $ & &$2+$ & $27^2\times27^2$\\
\hline
\T\B $I_{S5} \leq 3$~\cite{sliwa03a} & & $\leq  3.0187$& &$3$ & $7^3\times 7^3$\\
\hline \hline
\end{tabular}
\caption{Upper bounds on the maximal possible violation by PPT states for different Bell inequalities computed via Eq.~\eqref{eq:ppt_tsirelson}. All the other four dichotomic setting Bell inequalities investigated have a PPT Tsirelson bound which is already within the respective LHV bound by less than $10^{-6}$. Further specifications: $\ell$ labels levels of the matrix of moment, ``Matrix size'' refers to the dimension of the moment matrix. The last inequality corresponds to the tripartite case for states which are PPT for all bipartitions. The precision is at least $10^{-7}$.}\label{tab:PPTtsireslon}
\end{table}

\textit{Multipartite case}---We also considered examples involving more than two parties, where one is typically interested to verify genuine multipartite entanglement. This strongest form of multiparticle entanglement can be detected from observed correlations alone by violating a DIEW~\cite{bancal11a}. For device-independent entanglement quantification, we investigated---by a method analogous to the bipartite case---the minimal amount of genuine negativity~\cite{jungnitsch11a} needed to violate the DIEWs $I_{32}$ and $I_{33}$, where each party has respectively two or three dichotomic measurements~\cite{bancal11a,Barreiro}. Since $I_{32}$ is the Svetlichny inequality~\cite{svetlichny87}, its violation also demonstrates genuine multipartite nonlocality. From the bounds we computed, again tight for the Svetlichny case (see appendix), we can also obtain information about the type of entanglement responsible for given violations, in similar spirit to Ref.~\cite{brunner11a}. For instance, since the genuine negativity of any state of the three-qubit $W$-class~\cite{acin01a} is bounded by $\sqrt{2}/3$, one verifies that violations close to the maximum of these DIEWs can never be achieved by such type of entanglement. Moreover, our bounds show that these DIEWs can never be violated by states which are PPT mixtures~\cite{jungnitsch11a}. Using similar arguments as presented in Ref.~\cite{Bancal12a}, this result can even be extended to the $n$-partite witnesses $I_{n2}$ and $I_{n3}$. This suggests that, apart from a quantification, the generalization of PPT Tsirelson bounds to the multipartite case provides a tractable way to approximate the set of biseparable quantum correlations in the presence of more than three parties~\cite{bancal11a}.  Indeed, this approximation not only works well for the two families of $n$-partite DIEWs $I_{n2}$, $I_{n3}$, but also for a large number of symmetric 4-partite DIEWs involving two dichotomic measurements~\cite{liang13}.

Finally, there are also other questions for the multipartite case. At last we computed the maximal violation of the tripartite Bell inequality $I_{{S5}}\leq 3$~\cite{sliwa03a} for states which are PPT for all bipartitions. We find that it is bounded by $3.0187$, which shows that the example of Ref.~\cite{vertesi12a}, optimally violates the tripartite Peres conjecture via this inequality, cf. Tab.~\ref{tab:PPTtsireslon}.

\textit{Conclusion.}---We have presented a versatile tool to quantify entanglement in the bi- and multipartite case directly from the observed  measurement results, thus irrespective of any quantum functionality of the employed devices. This framework offers great practical benefit in experiments since its statements are robust against any kind of systematic errors in the assumed quantum model and involves minimal assumptions. Moreover such a quantification provides additional applications: It yields information about the underlying state space dimension or the type of entanglement involved in the multipartite case. Furthermore, our tool allows for a systematic investigation into the long-standing Peres conjecture, and the computation of device independent entanglement witness for genuine multipartite entanglement.

For future work, we believe that our method can be extended to bound, in a device-independent manner, other entanglement measures. Clearly, it will also be interesting to investigate how our technique can be used in conjunction with other separability criteria, or applied in the closely-related steering~\cite{wiseman07} (with the partial information step only applied to one-side) or sequential measurement scenarios~\cite{budroni13a}.

\begin{acknowledgments}
This work greatly profited from discussions with R.~Augusiak and J.~I. de Vicente at the Centre of Sciences in Benasque. We would also like to thank M.~Navascu{\'e}s, K.~F.~P\'al, V.~B. Scholz and T.~V\'ertesi for stimulating discussions about topic and technicalities. This work has been supported by the EU (Marie Curie CIG 293993/ENFOQI), the BMBF (Chist-Era project QUASAR), the Chist-Era project DIQIP, the Swiss NCCR-QSIT, the National Research Foundation and the Ministry of Education of Singapore.
\end{acknowledgments}

\appendix

\section{Example of constraints on the matrix of moments}

As an example of how the relations $1)-3)$ from the main text lead to the general structure of $\chi[\rho]$ as given by Eq.~($2$), we consider the case where the map $\Lambda_A$ is applied to a single system $\rho_A$, \ie, $\chi[\rho] = \sum_{ij} \ket{i}\bra{j}\tr[\rho_A A_j^\dag A_i]$. Suppose that the first setting $x$ has three outcomes $\{0,1,2\}$ whereas the second one, here labeled as $x^\prime$, has only two outcomes $\{0,1\}$. Using $\{A_i\} = \{ \mathbbm{1}, M_{0|x},M_{1|x},M_{0|x^\prime} \}$ one gets 
\begin{equation}
\!\!\chi[\rho]\!=\!\! \left[\!\! \begin{array}{cccc} 
\mathbbm{1}^\dag \mathbbm{1} & \mathbbm{1}^\dag M_{0|x} & \mathbbm{1}^\dag M_{1|x} & \mathbbm{1}^\dag M_{0|x^\prime} \\ 
M_{0|x}^\dag \mathbbm{1} & M_{0|x}^\dag M_{0|x} & M_{0|x}^\dag M_{1|x} & M_{0|x}^\dag M_{0|x^\prime} \\ 
M_{1|x}^\dag \mathbbm{1} & M_{1|x}^\dag M_{0|x} & M_{1|x}^\dag M_{1|x} & M_{1|x}^\dag M_{0|x^\prime}\\
M_{0|x^\prime}^\dag \mathbbm{1} & M_{0|x^\prime}^\dag M_{0|x} & M_{0|x^\prime}^\dag M_{1|x} & M_{0|x^\prime}^\dag M_{0|x^\prime} \end{array} \!\!\right]_{\rho_A}\!\!\!\!\!
\end{equation}
where $\rho_A$ indicates that we still must take the expectation values. Via the listed properties $1)-3)$ one obtains
\begin{equation}
\chi[\rho]=\left[ \begin{array}{cccc} 
\tr(\rho) & P(0|x) &P(1|x) & P(0|x^\prime)\\ 
P(0|x) & P(0|x) &  0 & u_1 + \mathbbm{i} u_2\\ 
P(1|x) & 0 & P(1|x) & u_3 + \mathbbm{i} u_4\\
P(0|x^\prime) & u_1 - \mathbbm{i} u_2 & u_3 - \mathbbm{i} u_4 & P(0|x^\prime) \end{array} \right].
\end{equation}
For the diagonal entries one employs $\tr[\rho_A M_{0|x} M_{0|x}] = \tr[\rho_A M_{0|x}] = P(0|x)$, while the zero entries occur by the projection identity $M_{0|x} M_{1|x}=0$. Since the expectation value of $M_{0|x} M_{0|x^\prime}$ is not directly accessible we can only set it equal to a general complex entry $u_1 +\mathbbm{i} u_2$ using real coefficients $u$.  

Note that whenever the underlying state $\rho$ is normalized, the first term of $\chi[\rho]$ is fixed to be $\tr(\rho)=1$. In the case where the underlying operator does not need to be normalized, however, as it happens for instance with $\sigma_{\pm}$ in the negativity estimation by Eq.~($3$), this entry is not fixed a priori. It is thus given by an unknown variable $u_0\in\mathbbm{R}$.

\section{The multipartite scenario}

In this section we define more precisely moment matrices for an arbitrary number of parties, and present multipartite optimization problems that compute the Tsirelson bounds for PPT mixtures and the device-independent quantification of entanglement in terms of genuine negativity.

In a multipartite scenario one can ask, in analogy with Eq.~($3$), for a bound on the observed correlations if the underlying state is a PPT mixture~\cite{jungnitsch11a}. A $n$-partite state $\rho$ is said to be a PPT mixture if it can be written as a convex combination $\rho = \sum_m p_m \rho_m$ of states $\rho_m$ which are PPT with respect to different bipartitions $m \subset \{1,\ldots,n\}$ of the $n$ subsystems. Since any state that is separable with respect to a chosen bipartition $m$ is also PPT according to this splitting, the set of biseparable states is included in the set of PPT mixtures. Hence if one verifies that a given state is not a PPT mixture, one automatically certifies that it is not biseparable, and thus, by definition, genuine multipartite entangled.

The definition of moment matrices as given in the main text naturally extend to this $n$-party scenario by applying a local CP map $\Lambda_s$ to each subsystem $s=1,\ldots,n$, \ie, $\rho \mapsto \chi[\rho] = \otimes_{s=1}^n \Lambda_s [\rho]$. By the local structure of this transformation one obtains: For any PPT mixture $\rho$ the resulting matrix of moments can be decomposed as $\chi[\rho]=\sum_m p_m \chi[\rho_m]$ with $\chi[\rho_m] \geq 0$ and $\chi[\rho_m]^{T_{\bar m}} \geq 0$, with $\bar m$ referring to the bipartition on the output spaces. Similar properties as given by $1)-3)$ in the main text, constrain the general structure of $\chi[\rho]$ to be $\chi[P,u]$ and one readily obtains a superset approximation for correlations that can be attained via PPT mixtures. Thus if one is interested in the optimal values of a linear expression like $I \cdot P$, where $P$ is generated by a PPT mixture, one obtains a bound by solving
\begin{eqnarray}
\label{eq:ppt_genuine_tsirelson}
\max && \: I \cdot P\\
\nonumber
\sthat, &&\: \chi[\rho] = \chi[P,u] = \sum \chi[p_m \rho_m] = \sum \chi[P_m,u_m],\\
\nonumber
&&\: \chi[\rho]_{\rm tr} = 1,\: \chi[p_m \rho_m] \geq 0, \: \chi[p_m \rho_m]^{T_{\bar m}}  \geq 0\quad\forall\, m.
\end{eqnarray}
Note that in this formulation we included already the probabilities $p_m$ into the matrix $\chi[P_m,u_m]$ such that $\chi[P_m,u_m]_{\tr}=p_m$ in this case. We show later in the appendix that even this multipartite extension converges in the limit of an infinite number of moments in $\chi[\rho]$.

For the multipartite equivalent of Eq.~($4$), let us first remind that the genuine negativity~\cite{jungnitsch11a} is a computable measure of genuine multipartite entanglement which reduces to the negativity in the bipartite case. It is given by $N_G[\rho] = \min_{\{p_m,\rho_m\}} \sum_m p_m N_m[\rho_{m}]$ where $N_m$ denotes the negativity with respect to bipartition $m$ and the minimization runs over all possible valid state decompositions of the density operator $\rho=\sum_m p_m \rho_m$~\footnote{The dual of the given measure, \ie, the way how it is defined in Ref.~\cite{jungnitsch11a}, would be $N_G(\rho) = - \min \{ \tr[\rho W] | \forall m: W=P_m+Q_m^{T_m},\mathbbm{1}\geq Q_m \geq 0, P_m \geq 0\}$. Hence one sees a missing constraint $\mathbbm{1}\geq P_m$ as compared to its original definition, but this does not alter the properties of this measure, \ie, zero for biseparable states, full LOCC monotone, convexity and invariance under local unitaries.}. In analogy to Eq.~($4$), if one observes a value of $I \cdot P = v$, a lower bound on the genuine negativity compatible with this violation is given by 
\begin{eqnarray}
\label{eq:dvi_genuine_negativity}
\min && \: \!\sum \chi[\sigma_m^{-}]_{\tr}\\
\nonumber
\sthat && \: \!\chi[\rho] = \sum \chi[p_m\rho_m],\: I\cdot P = v,\: \chi[\rho]_{\tr} = 1,\\
\nonumber
&& \: \!\chi[p_m\rho_m] = \chi[\sigma_m^+] - \chi[\sigma_m^-] \geq 0, \: \chi[\sigma_m^\pm]^{T_{\bar m}}  \geq 0\;\; \forall m.
\end{eqnarray}

\section{Additional information on the presented examples}

In this part we present some additional information on the examples mentioned in the main text and their respective negativity bounds. 

First note that any explicit quantum state $\rho_0$ which attains a certain value $v$ of a Bell inequality $I\cdot P=v$ provides an upper bound on the minimal negativity compatible with this violation, i.e., $N_{\rm min}[\rho\,|\,I\cdot P=v]=\min N[\rho\,|\,I\cdot P=v]\leq N[\rho_0]$. Furthermore, since the negativity is convex and invariant by adding local auxiliary states, the minimal negativity is a convex function in the amount of violation $v$, that we shall denote by $f(v) = N_{\rm min}[\rho\,|\,I \cdot P=v]$. If one uses the moment matrices $\chi[\rho]$ with increasing levels $\ell$ one obtains lower bounds $f_{\ell}(v) \leq f(v)$ with increasing accuracy. However, if one finds in the bound $f_{\ell}(v)$ an interval $v \in [a,b]$ such that $f_{\ell}(v)$ is linear and where the endpoints $v=a, v=b$ are known to be attainable by explicit quantum states then $f_{\ell}(v)=f(v)$ is a tight bound of the minimal negativity for $v\in [a,b]$.

\begin{figure}[h!t]
\includegraphics[angle=-90,scale=0.35]{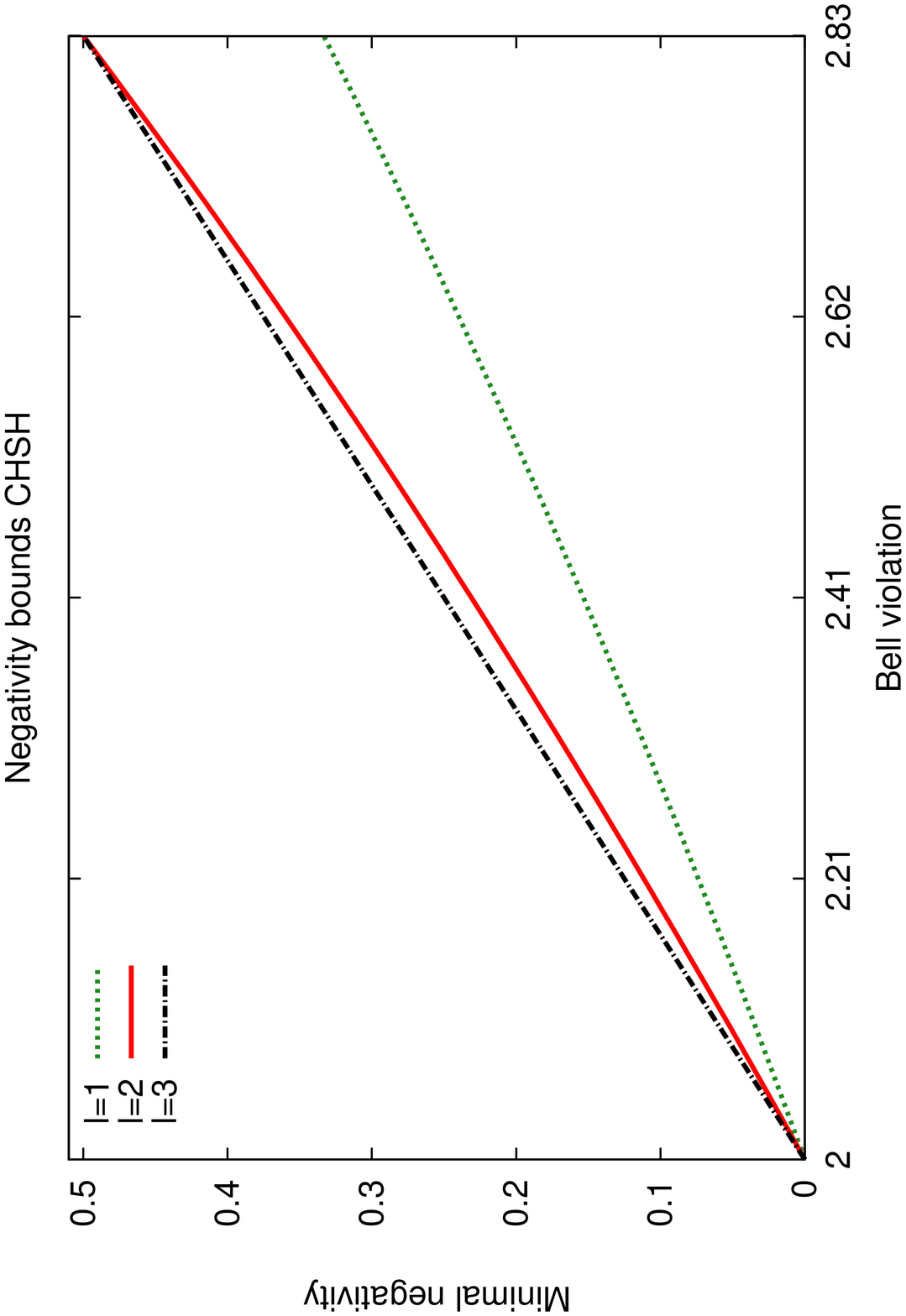} \\
\includegraphics[angle=-90,scale=0.35]{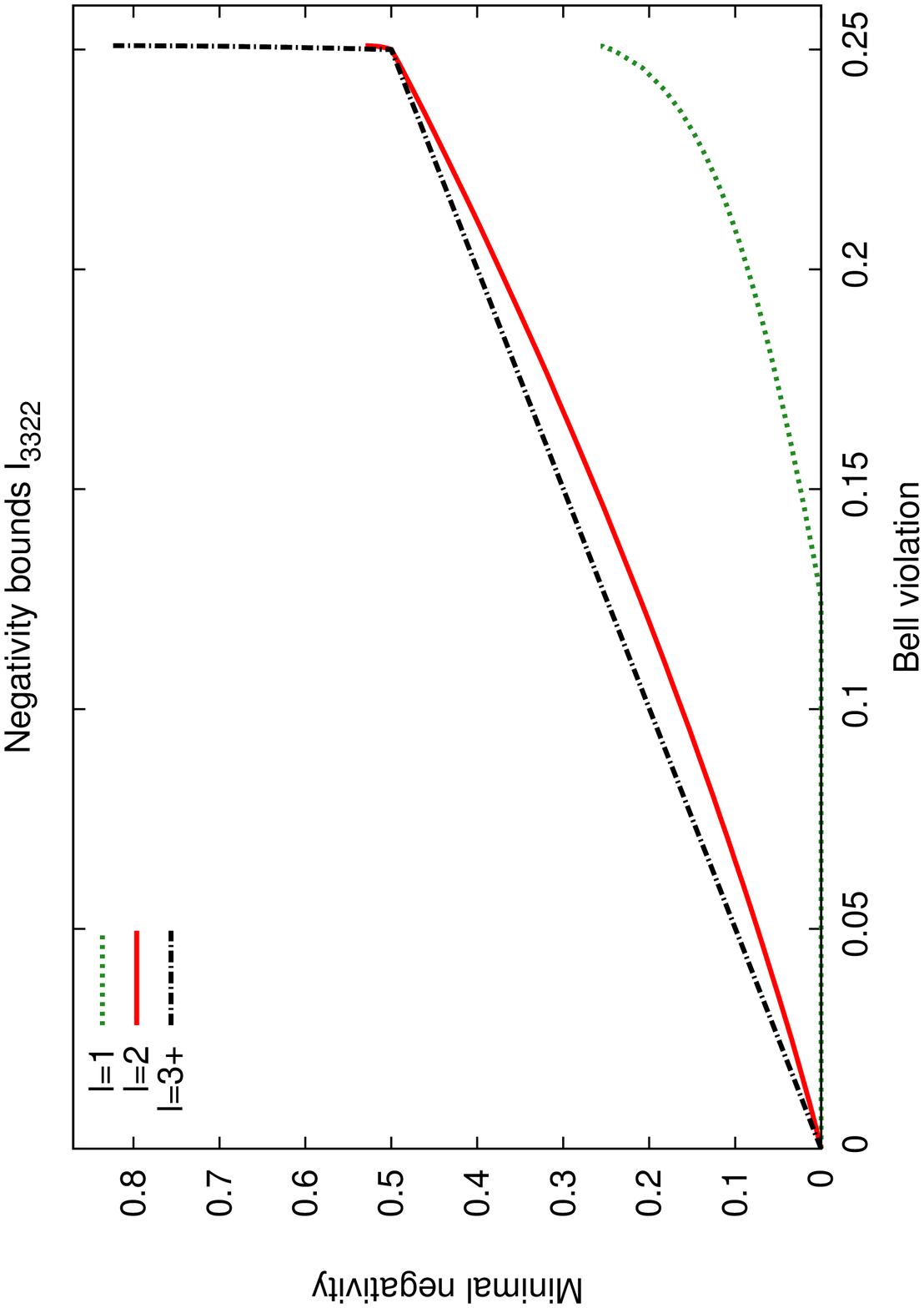} \\
\includegraphics[angle=-90,scale=0.35]{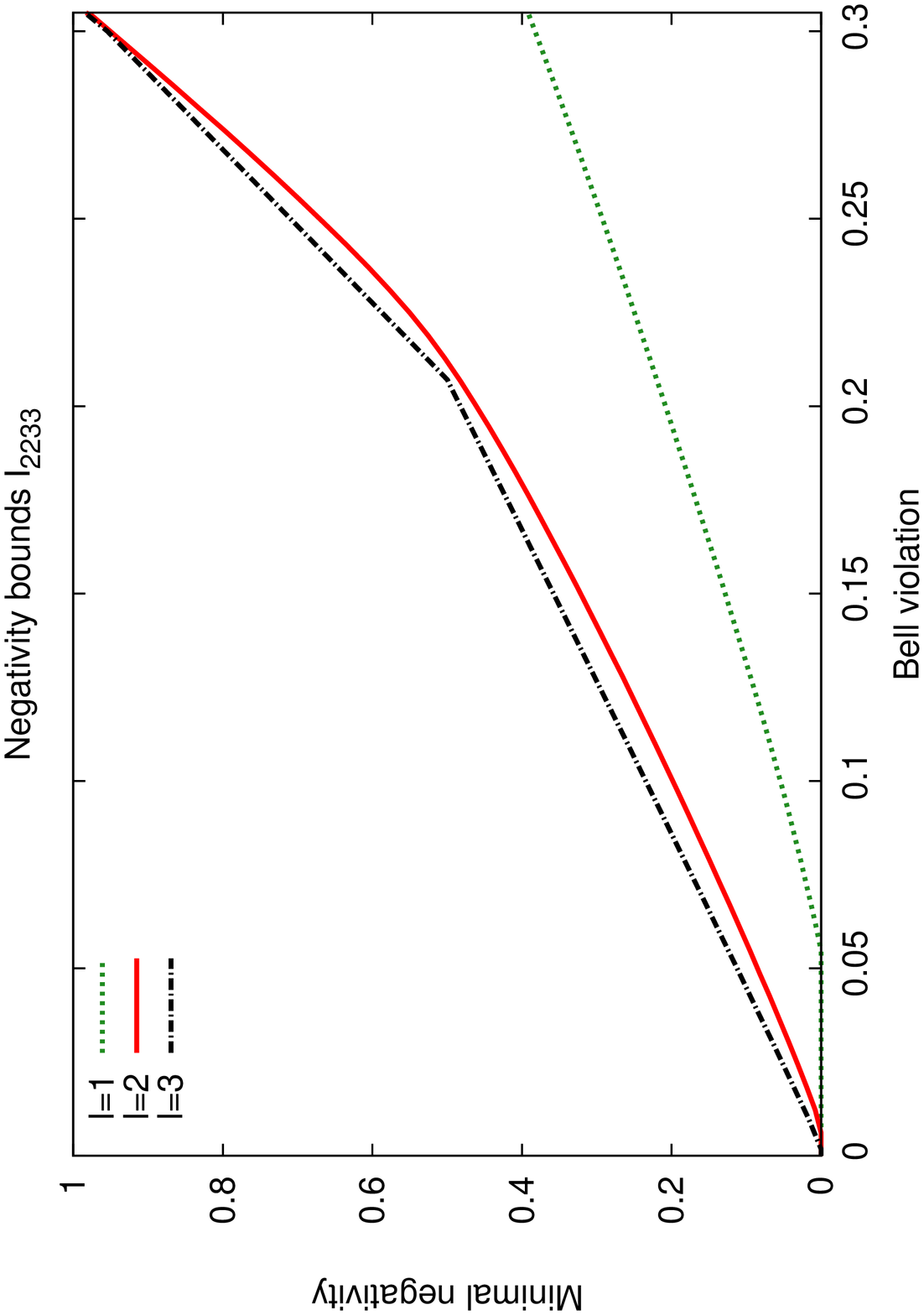}
\caption{Negativity bounds for the CHSH, $I_{3322}$ and $I_{2233}$ Bell inequality as given by Eq.~($4$) for different levels of the moment matrix. The last two inequalities certify a local state space dimension $d\geq 3$ if the negativity exceed $1/2$.}\label{fig:bell2negativity2}
\end{figure}

As we see in more detail in Fig.~\ref{fig:bell2negativity2}, for both, the CHSH inequality and the $I_{3322}$ inequality for a value between $[0,1/4]$, the negativity bounds that we obtained by solving Eq.~($4$) correspond to straight lines for the level $\ell=3$. Specifically, for the CHSH inequality, this negativity bound is a straight line joining the points $(I_{\rm chsh}=2,N[\rho]=0)$ and $(I_{\rm chsh}=2\sqrt{2},N[\rho]=1/2)$ with largest numerical deviation $\approx 7\times10^{-7}$ among all the computed instances. Likewise, for the $I_{3322}$ inequality, it is a linear bound connecting the origin and the coordinate $(I_{3322}=1/4,N[\rho]=1/2)$ with largest numerical deviation $\approx 8\times 10^{-7}$. Since both endpoints $I_{\rm CHSH}=2\sqrt{2}$ and $I_{3322}=1/4$ can be achieved with a maximally entangled two-qubit state, having negativity $N[\rho]=1/2$, we thus arrive at the sharp negativity bounds presented in the main text. 

As we see in Fig~\ref{fig:bell2negativity2}, or more detailed in Fig.~1 from the main text, if $I_{3322}>0.25$ then the negativity bound satisfies $N[\rho_{AB}]>1/2$. Because the negativity of any $\mathbb{C}^2\otimes \mathbb{C}^D$ state is upper bounded by $1/2$ for all $D\ge 2$, this certifies that such violations are achievable only with both local Hilbert space dimension $d\ge 3$, or, in other words, that $I_{3322}\le 0.25$ serves as a dimension witness~\cite{brunner08a} for qutrits.

This insight inspired us to also investigate the very first Bell inequality used to introduce a dimension witness~\cite{brunner08a}, more precisely, $I_{2233}\leq 0$~\cite{collins04a,kaslikowski02,collins02a}, where each party can choose between two $3$-valued outcome measurements. The corresponding negativity bounds are shown in the last plot of Fig.~\ref{fig:bell2negativity2}, and are again tight for the highest computed level, $\ell =3$, by the same arguments as before. Similar to $I_{3322}$ we observe that the negativity crosses $1/2$ at a violation of $v=1/\sqrt{2}-1/2 \approx 0.2071$, therefore a larger quantum violation shows that the underlying state is again at least 3-dimensional. For this inequality it is furthermore interesting that for the maximal violation at $I^{\rm max}_{2233}=(\sqrt{11/3}-1)/3$, our numerical optimization gives a negativity bound that differs from that of the optimal, non-maximally entangled state~\cite{acin02a} $\psi_{\lambda}\propto \ket{00} + \lambda \ket{11} + \ket{22}$ by less than $5\times10^{-6}$.

\begin{figure}[h!t]
\includegraphics[angle=-90,scale=0.35]{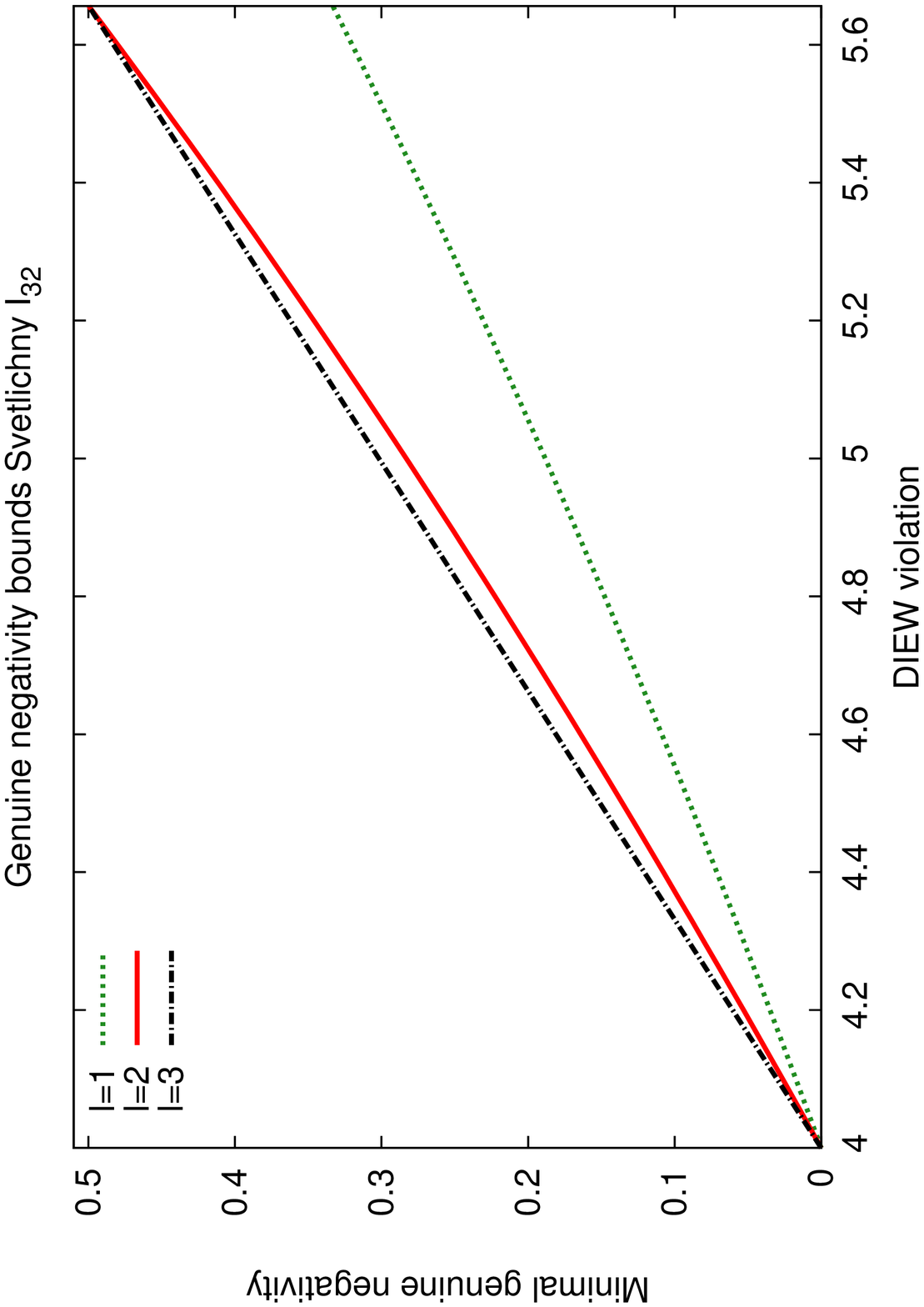}  \\
\includegraphics[angle=-90,scale=0.35]{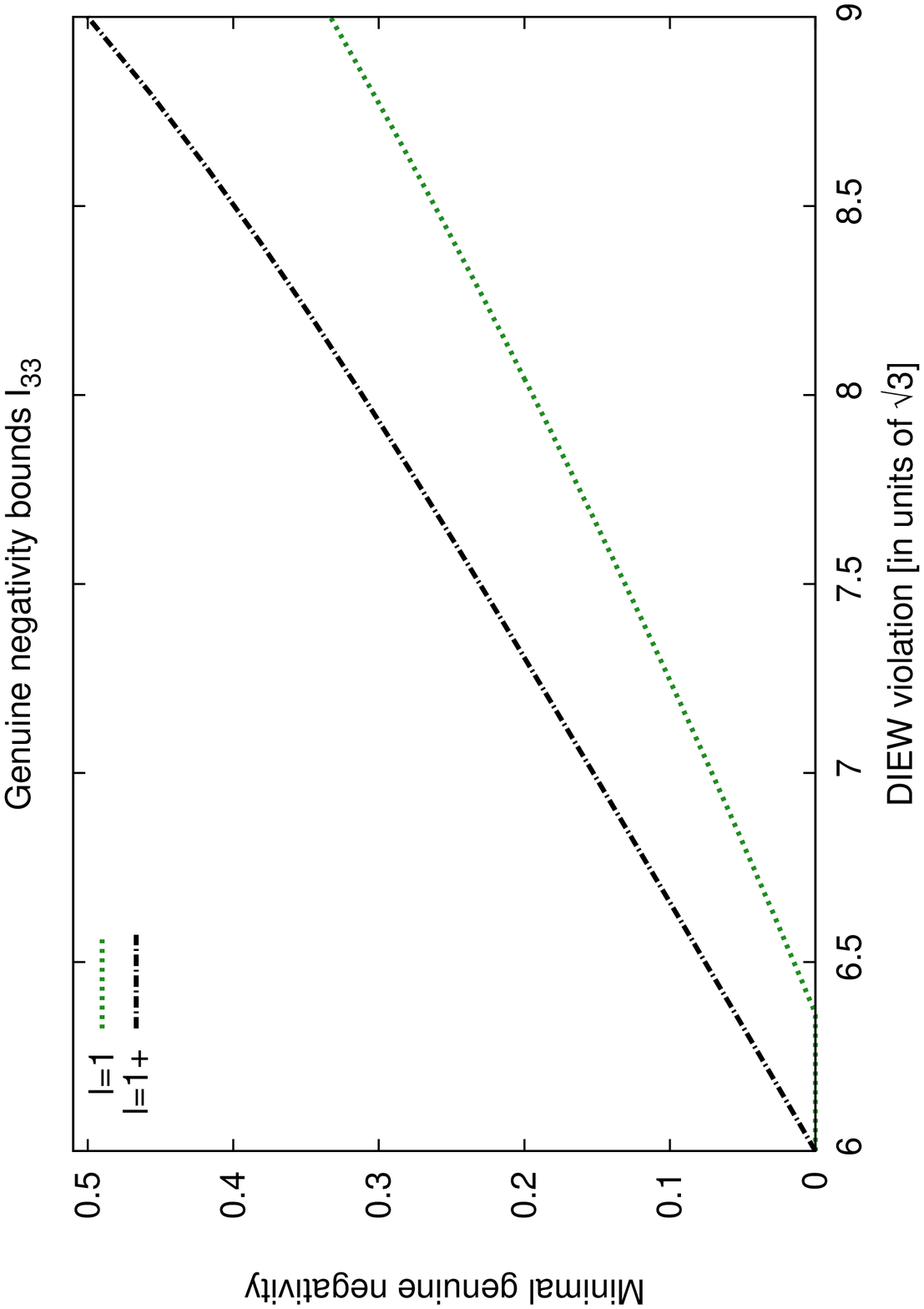}  
\caption{Genuine negativity as a function of the observed violation of the tripartite Svetlichny $I_{32}$ and DIEW $I_{33}$ inequalities. Since any three-qubit state of the $W$-class has a genuine negativity $\lesssim 0.471$,  larger violations certify that the underlying state is not of this SLOCC class.}\label{fig:bell2negativity3}
\end{figure}

In the multipartite case the amount of genuine negativity necessary to achieve different violations of the $I_{32}$ and $I_{33}$ DIEWs were computed according to Eq.~\eqref{eq:dvi_genuine_negativity}. For biseparable states these expressions are bounded by $I_{32}\leq4, I_{33}\leq 6\sqrt{3}$, while the maximal quantum values are $I_{32}^{\rm max}=4\sqrt{2},I^{\rm max}_{33}=9\sqrt{3}$ respectively~\cite{svetlichny87,bancal11a,Barreiro}. The results for different hierarchy levels are shown in Fig.~\ref{fig:bell2negativity3}. Here we refer to partial levels such as level $2+$ to denote moment matrices $\chi$ which were constructed with all terms involving $2$-fold products of local measurement operators, as well as some $3$-fold ones. Similarly to the CHSH and $I_{3322}$ inequalities, the plot for the Svetlichny inequality $I_{32}$ at level $\ell=3$ is a straight line up to numerical precision. Since the value $I_{32}=4\sqrt{2}$ is achievable with a three-qubit GHZ state with genuine negativity $N_G[\rho]=1/2$, we obtain the following tight bound for the minimal genuine negativity compatible with a Svetlichny inequality violation, $N_G[\rho_{ABC}|I_{32}=v] \geq (v-4)/(8\sqrt{2}-8)$.

From Fig.~\ref{fig:bell2negativity3} we also note that no violation of these DIEW is possible with PPT mixtures; hence the bound for biseparable states coincides with the one for PPT mixtures. Since any DIEW with this property can be found by using Eq.~(\ref{eq:ppt_genuine_tsirelson}), and this hierarchy applies to an arbitrary number of parties, this means that our technique can be used to find DIEWs of this kind for an arbitrary number of parties, in contrast to the numerical three-party technique presented in Ref.~\cite{bancal11a}. Furthermore, as pointed out in the main text, since any three qubit state of the W-class~\cite{duer99a} satisfies~\footnote{Since $N_G$ is convex, this optimization can be performed over pure three-qubit states $\ket{\psi_W}$ of the $W$-class. Note that the genuine negativity of a pure state is just the minimum bipartite negativity for all bipartitions $m$. Because $N_G$ is furthermore invariant under local basis changes one can employ the $LU$-equivalent standard form of a pure three-qubit state of the $W$-class~\cite{acin01a}, which leaves a straightforward optimization. The bound is saturated by the $W$-state.} $N_G(\rho_W) \leq \sqrt{2}/3 \approx 0.471$ , our result shows that violations $v > 5.563$ and $v > 15.36$ of the $I_{32}$ and $I_{33}$ inequalities can never be obtained by any such three-qubit states. Therefore one gains even further information from the achieved bound about the underlying type of entanglement, in similar spirit to Ref.~\cite{brunner11a}.

At last let us comment on the presented PPT Tsirelson bounds from the main text, cf. Tab.~I. All bipartite cases are computed directly from Eq.~($3$) and only the last entry corresponds to the multipartite scenario known to provide a counterexample of the multipartite Peres conjecture~\cite{vertesi12a}. For this last Tsirelson bound one optimized the inequality $I_{S5}$ with respect to tripartite states that are PPT for all bipartitions. This optimization problem is like Eq.~($3$) with the tripartite moment matrix and a PPT constraint $\chi[\rho]^{T_m}\geq 0$ for each bipartition. 

Via this numerical investigation on the bipartite case we hoped to find a counterexample to the bipartite Peres conjecture; a PPT state which could violate a Bell inequality. This perspective was triggered by the NPA hierarchy (or using also the modified moment matrix) to compute standard Tsirelson bounds. Although this method is only guaranteed to be complete in the limit of an infinite number of moments, it is important to stress that there are many known instances where one could stop the hierarchy already earlier, since one has already reached the true Tsirelson bound. This is certified by a special rank property of the solution~\cite{navascues08a} and means that the bound does not improve further even if one includes higher moments. However, in all our considered non-trivial PPT examples the respective bounds sharpened if we considered higher levels.

\section{Statement of real states and measurements}

In this section we show that the underlying quantum state and measurement operators can generally be assumed to be real when considering the device-independent quantification of entanglement in terms of (genuine) negativity, i.e., there exists an equivalent real construction having the same (or less) amount of (genuine) negativity. This extends the result of Ref.~\cite{mckague09a} which already proves that probability distributions arising from quantum theory in a Bell-type experiment can always be reproduced using only real states and real measurement operators.

This real property helps in the numerical implementation of the bi- or multipartite programs, Eqs.~($3$),($4$) and Eqs.~\eqref{eq:ppt_genuine_tsirelson},~\eqref{eq:dvi_genuine_negativity} respectively, since it provides a notable parameter reduction in the optimization problems. This reduction becomes even greater in the presence of additional symmetries. We shortly comment on this parameter reduction at the end of this section.

\begin{proposition}\label{Proposition:RealStuff}
Any $n$-partite probability distribution $P$ having a quantum representation with respect to density matrix $\rho$ and measurement operators $A^1_i,A^2_j,\ldots,A^n_k$ (with the superscript labeling the party), also has a real quantum representation, \ie, a representation in terms of a real-valued quantum state $\fixwidehat \rho=\fixwidehat \rho^T$ and real-valued measurement operators $\fixwidehat A_i^1 = \fixwidehat A_i^{1T}, \fixwidehat A_j^2 = \fixwidehat A_j^{2T}, \ldots,\fixwidehat A_k^n = \fixwidehat A_k^{nT}$ having an equal or lower amount of (genuine) negativity, $N_m[\fixwidehat \rho] = N_m[\rho]$ for any bipartition $m$ and $N_G[\fixwidehat \rho] \leq N_G[\rho]$. Furthermore one has: 

a)~If $\rho$ is PPT across a bipartition $m$ then $\fixwidehat \rho$ can even be assumed to be PPT invariant across $m$, i.e., $\fixwidehat \rho = \fixwidehat \rho^{T_m} = \fixwidehat \rho^T$. Likewise, if $\rho$ is a PPT mixture then $\fixwidehat \rho$ can even be assumed to have a mixture of real PPT invariant states, $ \fixwidehat \rho=\sum p_m \fixwidehat \rho_m $ with $\fixwidehat \rho_m = \fixwidehat \rho_m^{T_m} = \fixwidehat \rho_m^T$ for all $m$. 

b)~If the observed distribution is invariant under arbitrary exchange of the parties, $P = V(\pi) P V(\pi)^\dag$ for all possible permutation $\pi$, then the underlying state and measurements can further be assumed to be permutationaly invariant, i.e., $\fixwidehat V(\pi)\, \fixwidehat \rho\, \fixwidehat V(\pi)^\dag = \fixwidehat \rho$ for all $\pi$ and $\fixwidehat A_i^{1} = \fixwidehat A_i^{2} = \ldots = \fixwidehat A_i^{n}$.
\end{proposition}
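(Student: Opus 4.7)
The proof adapts the complex-to-real embedding of Ref.~\cite{mckague09a}, which already preserves the observed probabilities, and then applies symmetrizations to gain PPT-invariance or permutation-invariance. To set up the embedding: for each party $k$, adjoin a real qubit ancilla so that the local space becomes $\mathcal{H}_k\otimes\mathbb{R}^2$; writing any Hermitian $M$ on $\mathcal{H}_k$ as $M=R+iI$ with $R$ real symmetric and $I$ real antisymmetric, define $\fixwidehat M:=R\otimes\mathbbm{1}+I\otimes J$ with $J=i\sigma_y$. This map is real symmetric, preserves products, and hence preserves the projection and identity relations of the measurement operators. Ref.~\cite{mckague09a} supplies a real density matrix $\fixwidehat\rho$ on $\bigotimes_k(\mathcal{H}_k\otimes\mathbb{R}^2)$ satisfying $\tr[\fixwidehat\rho\,\bigotimes_k\fixwidehat A^k_{a_k|x_k}]=P(a_1,\ldots,a_n|x_1,\ldots,x_n)$.

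The negativity bounds then come from a spectral analysis. Partial transposition acts entrywise, so $\rho^{T_m}=R_\rho^{T_m}+iI_\rho^{T_m}$ with $R_\rho^{T_m}$ symmetric and $I_\rho^{T_m}$ antisymmetric. A direct computation shows that $(\fixwidehat\rho)^{T_m}$ differs from the real embedding of $\rho^{T_m}$ only by a local unitary on the ancillas of parties in $m$ (flipping the sign of the antisymmetric part). Since $\fixwidehat M$ is unitarily equivalent to $M\oplus M^*$ for any Hermitian $M$, which doubles the spectrum of $M$, the normalization needed to keep $\tr\fixwidehat\rho=\tr\rho$ cancels the doubling inside the negativity sum, yielding $N_m[\fixwidehat\rho]=N_m[\rho]$ for every bipartition $m$. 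For genuine negativity, given any decomposition $\rho=\sum_m p_m\rho_m$ attaining $N_G[\rho]$, linearity of the embedding gives $\fixwidehat\rho=\sum_m p_m\fixwidehat\rho_m$ and hence $N_G[\fixwidehat\rho]\leq\sum_m p_m N_m[\fixwidehat\rho_m]=\sum_m p_m N_m[\rho_m]=N_G[\rho]$.

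For part (a), if $\rho$ is PPT across $m$ then $\rho^{T_m}$ is itself a density matrix, and because the embedded measurements are real symmetric one has $(\bigotimes_k\fixwidehat A^k_{a_k|x_k})^{T_m}=\bigotimes_k\fixwidehat A^k_{a_k|x_k}$, so $(\fixwidehat\rho)^{T_m}$ reproduces the same $P$. The symmetrization $\fixwidehat\rho':=\tfrac{1}{2}[\fixwidehat\rho+(\fixwidehat\rho)^{T_m}]$ is then a real density matrix obeying $\fixwidehat\rho'=(\fixwidehat\rho')^{T_m}=(\fixwidehat\rho')^T$ that still reproduces $P$; for a PPT mixture one performs this on each term in its own bipartition. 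For part (b), introduce a common local Hilbert space $\tilde{\mathcal{H}}:=\bigoplus_k(\mathcal{H}_k\otimes\mathbb{R}^2)$ for every party with common measurements $\tilde A_i:=\bigoplus_k\fixwidehat A^k_i$ (which still satisfy the projection and identity relations), embed $\fixwidehat\rho$ into $\tilde{\mathcal{H}}^{\otimes n}$ by placing party $k$ on its own block to obtain a state $\rho_0$, and finally set $\tilde\rho:=\tfrac{1}{n!}\sum_\pi V(\pi)\rho_0V(\pi)^\dagger$. Permutation invariance of $P$ ensures that every summand reproduces $P$, and the (genuine) negativity bounds descend via convexity since a permutation of parties is a local unitary.

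The main obstacle is the spectral identification in the second paragraph---showing that Ref.~\cite{mckague09a}'s distribution of the state's imaginary part across the party ancillas is compatible with the partial transpose on every bipartition simultaneously, so $(\fixwidehat\rho)^{T_m}$ really equals the real embedding of $\rho^{T_m}$ up to a local unitary for each $m$. Once this compatibility is in place, the remaining claims reduce to elementary convexity, the real symmetry of the embedded measurements, and standard direct-sum and permutation-averaging arguments.
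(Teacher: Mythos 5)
Your overall strategy coincides with the paper's: the same Hermitian-to-real-symmetric embedding $M\mapsto R\otimes\mathbbm{1}+I\otimes J$ for the measurements, a McKague-type real state with one ancilla qubit per party, preservation of negativity via the doubled-and-halved spectrum, the mixture $\tfrac{1}{2}[\fixwidehat\rho+\fixwidehat\rho^{T_m}]$ for part~a), and a flag-register symmetrization for part~b) (you average over all $n!$ permutations on a direct-sum space, the paper uses orthogonal flag states $\ket{01},\ket{10}$ for $n=2$; these are equivalent devices).

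However, the step you yourself flag as ``the main obstacle'' is precisely the crux, and as written it is a genuine gap: everything hinges on choosing a \emph{specific} realification of the state for which partial transposition across \emph{every} bipartition commutes with the embedding up to a local unitary. Citing Ref.~\cite{mckague09a} for ``a real density matrix reproducing $P$'' does not supply this, because that reference only needs to preserve probabilities, not the PPT structure; a realification that, say, stores the imaginary part in a single global ancilla would reproduce $P$ but would not satisfy your claimed identity for proper bipartitions. The paper closes this by writing the state explicitly as $\fixwidehat\rho \propto \mathbbm{Re}\left[(\mathbbm{1}-iY)^{\otimes n}\otimes\rho\right]$ with $Y=i\sigma_y$, one factor per party. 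Then $Y^T=-Y$ gives $\fixwidehat\rho^{\,T_{A_1'A_1}}\propto \mathbbm{Re}\left[(\mathbbm{1}+iY)\otimes(\mathbbm{1}-iY)^{\otimes(n-1)}\otimes\rho^{T_{A_1}}\right]$, and conjugation by $\sigma_x$ on the first ancilla (using $\sigma_x Y\sigma_x=-Y$) turns this back into the realification of $\rho^{T_{A_1}}$; the trace-norm identity $\|\fixwidehat\rho^{\,T_{\bar m}}\|_{\tr}=\|\rho^{T_m}\|_{\tr}$, and with it $N_m[\fixwidehat\rho]=N_m[\rho]$, then follows from the rank-one projector structure of $(\mathbbm{1}-iY)/2$. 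You need to exhibit this (or an equivalent) explicit construction and carry out that computation; without it the spectral identification in your second paragraph is an assertion, not a proof. Two smaller points: your statement that ``a permutation of parties is a local unitary'' is not correct --- what you actually need is convexity together with the permutation covariance of $N_m$ and the permutation invariance of $N_G$; and the paper additionally recovers the reverse inequality $N[\fixwidehat\rho^{\rm sym}]\geq N[\fixwidehat\rho]$ by an LOCC argument (measuring the flag registers), which your averaging argument does not give.
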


\begin{proof}
For the first statement we construct for any set of projectors $A^s$, $s=1,\ldots,n$ and any state $\rho$, another real state $\fixwidehat \rho=\fixwidehat \rho^T$ and real projectors $\fixwidehat A^s = \fixwidehat A^{sT}$ such that $\tr[\rho A^1\otimes A^2\otimes \ldots\otimes A^n]=\tr[\fixwidehat \rho \fixwidehat A^1 \otimes \fixwidehat A^2 \otimes \ldots\otimes \fixwidehat A^n]$ with $N_m[\fixwidehat \rho]=N_m[\rho]$ for any bipartition $m$ and $N_G[\fixwidehat \rho] \leq N_G[\rho]$.

Let us start with the projectors. For any chosen basis we can decompose the matrix $A = A_r + {i} A_i$ into a real, symmetric part $A_r = \mathbbm{Re}(A) = A_r^T$ and an imaginary, anti-Hermitian part $A_i = \mathbbm{Im}(A) = - A_i^T$. To this matrix we now associate the real and symmetric matrix 
\begin{equation}
\label{eq:barA}
\fixwidehat A = \mathbbm{1} \otimes A_r + Y \otimes A_i = \left[ \begin{array}{cc} A_r & A_i \\ -A_i & A_r \end{array} \right]=\fixwidehat A^T
\end{equation}
where $Y={i} \sigma_y$. This represents the well-known isomorphism between Hermitian and symmetric matrices, \eg, Ref.~\cite{cobook}. Note that the operator $\fixwidehat A$ acts on the enlarged Hilbert space $\mathcal{H}_{\fixwidehat A} = \mathcal{H}_{A^\prime} \otimes \mathcal{H}_A$ with $\mathcal{H}_{A^\prime} = \mathbbm{C}^2$. 
Since the original $A$ satisfies the projection identity $A^2 = A$ this provides the relations $A_r^2 - A_i^2 = A_r$ and $A_r A_i + A_i A_r = A_i$ and henceforth the projection identity for $\fixwidehat A$ by
\begin{align}
\nonumber
\fixwidehat A^2 &= (\mathbbm{1} \otimes A_r + Y \otimes A_i) (\mathbbm{1} \otimes A_r + Y \otimes A_i) \\
&= \mathbbm{1} \otimes (A_r^2 - A_i^2) + Y \otimes (A_rA_i + A_r A_i) = \fixwidehat A.
\end{align}
We employ this construction for all projectors $A^s \mapsto \fixwidehat A^s$. 

Next let us define the appropriate extension of the density matrix $\rho$ to $\fixwidehat \rho=\fixwidehat \rho_{A_1^\prime A_1 A_2^\prime A_2 \ldots A_n^\prime A_n}$ where the subscripts label the party to which the Hilbert space belongs. However, for ease of reading, we shall use a different ordering of the Hilbert spaces (namely, the primed ones followed by the unprimed ones) $\fixwidehat \rho_{A_1^\prime A_2^\prime\ldots A_n^\prime A_1A_2\ldots A_n} = \mathcal{R}(\rho)$, which could be undone $\fixwidehat \rho = V \fixwidehat \rho_{A_1^\prime \ldots A_n^\prime A_1 \ldots A_n} V$ by an appropriate permutation $V$ on Hilbert spaces. More precisely, this operator is given by
\begin{align}
\label{eq:barrho1}
	\fixwidehat \rho_{A_1^\prime A_2^\prime \ldots A_n^\prime A_1A_2\ldots A_n} &= 2^{-n}\,\mathbbm{Re}\left[(\mathbbm{1}-iY)^{\otimes \nn}\otimes\rho\right].
\end{align}

Let us first show that $\fixwidehat \rho$\, is positive semi-definite and has trace one as required for any density operator. To this end, first note that the right-hand-side of Eq.~\eqref{eq:barrho1} can be rewritten as a sum of two orthogonal parts, namely, $2^{-n-1}(\mathbbm{1}-iY)^{\otimes \nn}\otimes\rho$ and its transpose, where the orthogonality follows from the fact that $(\mathbbm{1}-iY)(\mathbbm{1}+iY)=0$. Since the set of eigenvalues remains unchanged under transposition, this means that the eigenvalues of $\fixwidehat \rho$ are precisely one-half of the respective eigenvalues of $2^{-n}(\mathbbm{1}-iY)^{\otimes \nn}\otimes\rho$, but with twice the multiplicity. However since $(\mathbbm{1}-iY)/2$ is just a rank-one projector one has that the non-vanishing eigenvalues of $2^{-n}(\mathbbm{1}-iY)^{\otimes \nn}\otimes K$ for any Hermitian matrix $K$ are precisely those of $K$. Thus, we obtain $\fixwidehat \rho \geq 0$, $\| \fixwidehat \rho \|_{\tr} =\| \rho \|_{\tr} =  1$ and $\| \mathcal{R}(\rho^{T_m})\|_{\tr} = \| \rho^{T_m} \|_{\tr} $ for any bipartition $m$.  
 
Next we need to show that these choices indeed preserve the expectation values. This is best seen using the state given by Eq.~(\ref{eq:barrho1}) and the measurement form of Eq.~(\ref{eq:barA}) together with the following identity 
\begin{align}
\nonumber
\tr_{A'}\left[(\mathbbm{1}-iY)\otimes \rho \right.&\left. (\mathbbm{1}\otimes A_r + Y\otimes A_i)\right]\\
\nonumber
=& \tr_{A'}\left[\mathbbm{1}\right] \rho\,A_r + \tr_{A'}\left[-iYY\right] \rho\,A_i  \\
\nonumber
&+ \tr_{A'}\left[Y\right] \rho\,A_i  + \tr_{A'}\left[-iY\right] \rho\, A_r \\
=& 2\rho\,A,
\end{align}
which follows from $Y^2=-\mathbbm{1}$ and $\tr(Y)=0$. Applying such identities when tracing out each auxiliary space yields
\begin{align}
\nonumber
\tr[\fixwidehat \rho  \fixwidehat A^1 \otimes \ldots \otimes \fixwidehat A^n]=&2^{-n}\, \mathbbm{Re}\left\{ \tr\left[V(\mathbbm{1}-iY)^{\otimes n}\otimes \rho V \times\right.\right.\\
\nonumber
&\ \ \ \left.\left.(\mathbbm{1} \otimes A^1_r + Y \otimes A^1_i)\otimes\ldots\right]\right\}\\
\nonumber
=& \mathbbm{Re}\left(\tr[ \rho\  A^1\otimes \ldots \otimes A^n]\right)\\
=& \tr[ \rho\  A^1\otimes \ldots \otimes A^n].
\end{align}

Next, we prove that the negativity remains constant $N_m[\rho] = N_m[\fixwidehat \rho]$ for any bipartition $m$. Thus we need the partial transposition of $\fixwidehat \rho$ with respect to $m$. For simplicity, we now consider partial transposition with respect to $\fixwidehat A$; the general treatment is completely analogous.  To this end, we remind that the partial transposition is a linear operation and thus from Eq.~\eqref{eq:barrho1} and the fact that $Y^T=-Y$, we have
\begin{align}
	\fixwidehat \rho_{A_1^\prime \ldots A_n^\prime A_1\ldots A_n}^{T_{A_1^\prime A_1}}
	&= 2^{-n}\,\mathbbm{Re}\left[(\mathbbm{1}+iY)\otimes(\mathbbm{1}-iY)^{\otimes n-1}\otimes\rho\right]\!,\nonumber\\
	&= U\, \mathcal{R}[ \rho^{T_{A_1}}]\,U^\dag
\end{align}
where $U=\sigma_x\otimes \mathbbm{1}^{n-1}\otimes\mathbbm{1}_{A_1\ldots A_n}$ is a unitary matrix and we have made use of the fact that $\sigma_x\,Y\,\sigma_x=-Y$.
Since a unitary does not change the eigenvalues and using the previously mentioned invariance of the trace-norm we get $\|\fixwidehat \rho^{T_{\fixwidehat A}}\|_{\tr} = \| \mathcal{R}(\rho^{T_A})\|_{\tr} = \| \rho^{T_A}\|_{\tr}$, and hence the statement that the negativity is unchanged. This statement holds for an arbitrary bipartition $m$.
Finally, suppose that $\rho = \sum p_m \rho_m$ is the optimal decomposition for the original state $\rho$ in the definition of the genuine negativity, \ie, $N_G[\rho] = \sum p_m N_m[\rho_m]$. Note that because the map $\rho \mapsto \fixwidehat \rho$ is linear and preserves positivity one knows that $\fixwidehat \rho = \sum_m p_m \fixwidehat \rho_m$ is a valid decomposition of $\fixwidehat \rho$ into states attributed to different bipartitions $m$. Since the genuine negativity is defined by the minimum over all such decompositions, and because $N_m[\fixwidehat \rho_m]=N_m[\rho_m]$ one readily obtains $N_G[\fixwidehat \rho] \leq  \sum p_m N_m[\fixwidehat \rho_m] = N[\rho]$, which proves the first part of the proposition.

Henceforth, we shall work with the real state $\fixwidehat \rho$ and the corresponding real measurement operators $\fixwidehat A_i^s=\fixwidehat A_i^{sT}$, $s=1,\ldots,n$ that can produce the same correlations. Assuming this, statement a) is obtained as follows: If the original state $\rho$ is PPT across a bipartition $m$, then the real state $\fixwidehat \rho$ has to be PPT as well. Therefore $\fixwidehat \rho^{T_m} \geq 0$ can be considered as another quantum state, which, similarly would produce the correct observations since all measurements are real, 
\begin{align}
\nonumber
P &= \tr[\fixwidehat \rho \fixwidehat A_i^1 \otimes \fixwidehat A_j^2 \otimes \ldots \otimes \fixwidehat A_k^n]\\
\nonumber
&= \tr[\fixwidehat \rho (\fixwidehat A_i^1 \otimes \fixwidehat A_j^2 \otimes \ldots \otimes \fixwidehat A_k^n)^{T_m}] \\
&= \tr[\fixwidehat \rho^{T_m} \fixwidehat A_i^1 \otimes \fixwidehat A_j^2 \otimes \ldots \otimes \fixwidehat A_k^n].
\end{align}
Here, between the first and the second line we transposed the measurement operator of all parties belonging to the bipartition $m$. Thus also the equal-weight mixture $\fixwidehat \rho^{\rm ave} = (\fixwidehat \rho + \fixwidehat \rho^{T_m})/2$ is a real state that is PPT invariant with respect to $m$ and that yields the same correlations. A similar statement holds for PPT mixtures with partial transposition applied to each state $\rho_m$ in the decomposition.

Statement b) follows from a similar argument. For simplicity, we provide the proof below only for $n=2$ parties; the generalization to an arbitrary number of parties is analogous. Let us suppose that $\fixwidehat \rho=\fixwidehat \rho_{AB}$ and that this state as well as $\fixwidehat A_i,\fixwidehat B_j$ are not yet symmetric under exchange of the parties.\footnote{At this point we can already assume that the local Hilbert spaces are isomorphic $\mathcal{H}_{A} \cong \mathcal{H}_{B}$ by an appropriate embedding of a possibly smaller space in higher dimension. In this way the swap operator $V=V(\pi_{AB})$ is well defined.} Let us add appropriate local auxiliary states $\ket{0},\ket{1}$ (on system $A'$, $B'$ respectively) to signal whether the state is swapped or not, and consider the convex combination $\fixwidehat \rho^{\rm sym}_{\fixwidehat A \fixwidehat B}$ given by
\begin{align}
\nonumber
\fixwidehat \rho^{\rm sym}_{\fixwidehat A \fixwidehat B} \! &= \frac{1}{2} \left[ \fixwidehat \rho_{AB} \! \otimes \! \ket{01}\!_{A^\prime \! B^\prime}\!\bra{01}  + V \fixwidehat \rho_{AB} V^\dag \!\otimes\! \ket{10}\!_{A^\prime \! B^\prime}\!\bra{10} \right] \\
\nonumber
&=\frac{1}{2} \left[ \fixwidehat \rho_{AB} \!  \otimes \!  \ket{01}\!_{A^\prime \! B^\prime}\!\bra{01} \right. \\ & \label{eq:sym} \:\:\:\:\:\:\:\:\:\:\:\: \left.+  {\fixwidehat{V} ( \fixwidehat \rho_{AB}\!  \otimes\!  \ket{01}\!_{A^\prime \! B^\prime}\!\bra{01})  \fixwidehat{V}^{\dag}}\right]
\end{align}
where $\fixwidehat{V}=V(\pi_{\fixwidehat A \fixwidehat B})$ now denotes the swap operator on $\fixwidehat A = AA^\prime$ and $\fixwidehat B = BB^\prime$. Similarly for the measurements 
\begin{equation}
\fixwidehat A^{\rm sym}_i = \fixwidehat B^{\rm sym}_i = \fixwidehat A_i \otimes \ket{0}\bra{0} + \fixwidehat B_i \otimes \ket{1}\bra{1},
\end{equation}
such that one finally can check that this is a valid quantum representation with
\begin{equation}
P = [P+VPV^\dag]/2 = \tr[\fixwidehat \rho_{AB}^{\rm sym} \fixwidehat A_i^{\rm sym} \otimes \fixwidehat A_j^{\rm sym}].
\end{equation}
From the structure given by Eq.~(\ref{eq:sym}) one observes that $\fixwidehat \rho_{\fixwidehat A \fixwidehat B}^{\rm sym}$ is indeed invariant under the swap operator $\fixwidehat{V}$. Finally since the negativity is convex and symmetric~\cite{horodecki10a} one obtains $N[\fixwidehat \rho_{\fixwidehat A \fixwidehat B}^{\rm sym}] \leq N[\fixwidehat \rho_{AB}]$, while the LOCC monotonicity by measuring the primed systems first provides $N[\fixwidehat \rho_{\fixwidehat A \fixwidehat B}^{\rm sym}] \geq N[\fixwidehat \rho_{AB}]$ such that the negativity is indeed invariant. 
\end{proof}

Let us briefly remark how Proposition~\ref{Proposition:RealStuff} reduces the number of free parameters for the semidefinite programs.

Firstly, note that since the underlying state and both measurements can be chosen to be real, the matrix of moments can be assumed to be symmetric $\chi[\rho] = \chi[\rho]^T$. For the multipartite PPT mixture question of Eqs.~\eqref{eq:ppt_genuine_tsirelson},\eqref{eq:dvi_genuine_negativity}, this property can be applied to every bipartition, \ie, $\chi[\rho_m] = \chi[\rho_m]^T$. Concerning the negativity finally, this property can be further imposed for the operators appearing in the variational formulation, since a solution given by $\rho_m = \sigma_m^+ - \sigma_m^-$ with $(\sigma_m^{\pm})^{T_m} \geq 0$ would imply the alternative solution $\rho_m =\rho_m^T =  (\sigma_m^+)^T - (\sigma_m^-)^T$ with $[(\sigma_m^\pm)^T]^{T_m} = [(\sigma_m^\pm)^{T_m}]^T \geq 0$, and hence also its equal-weight mixture. Therefore in the negativity estimation as given by the semidefinite programs of Eqs.~($4$),(\ref{eq:dvi_genuine_negativity}), we can set additionally that $\chi[\rho_m] = \chi[\rho_m]^T$ and $\chi[\sigma_m^{\pm}]= \chi[\sigma_m^{\pm}]^T$.

The statement a) also simplifies the computation of the respective PPT Tsirelson bounds as given by Eq.~($3$) or Eq.~(\ref{eq:ppt_genuine_tsirelson}). Since the state can be assumed to be real and PPT invariant, this provides the symmetry $\chi[\rho_m] = \chi[\rho_m]^T = \chi[\rho_m]^{T_{\bar m}} \geq 0$ for all $m$.

Finally, whenever one considers a symmetric Bell inequality~\cite{bancal10}, \ie, satisfying $I_{abxy} = I_{bayx}$ in the bipartite case, one can impose this symmetry also for the corresponding distribution $P(a,b|x,y) = P(b,a|y,x)$, such that b) of Proposition~\ref{Proposition:RealStuff} gets relevant. If both local maps in the construction of $\chi[\rho]$ have an equal number of moments, then $\chi[\rho]$ can be assumed to be invariant $V_{\bar A \bar B} \chi[\rho] V_{\bar A \bar B}^\dag = \chi[\rho]$ under the swap operator $V_{\bar A \bar B}$. This symmetry in particular helps in the higher levels of the hierarchy.

\section{Convergence of respective PPT characterizations}

As emphasized in the main text, the method to approximate PPT Tsirelson bounds via Eq.~($3$) converges to a description of the commutative set in the limit of an infinite number of moments. A similar statement holds for the multipartite case and PPT mixtures. To show this, we start this section by formulating the commutative version of the respective quantum representation. We then proceed to show the equivalence between this commutative and the tensor product version in the case of a finite-dimensional quantum representation and, at last, prove the convergence. We stress that this convergence follows from the convergence of the NPA hierarchy~\cite{navascues08a}. 

At first let us motivate this distinction between the tensor product and commutative version: In quantum information we are commonly used to employ tensor products for measurements on different subsystems, \ie, $\tr[ \rho_{AB} A \otimes B]$. In contrast, in algebraic field theory for instance, measurements on different parts are described by commuting operators $[\tilde A, \tilde B]=0$, both acting already on the bipartite state $\tilde \rho$ (therefore we use the tilde) such that expectation values become $\tr[\tilde \rho \tilde A \tilde B]$. If one identifies $\tilde A = A \otimes \mathbbm{1}, \tilde B = \mathbbm{1} \otimes B$ and $\tilde \rho = \rho_{AB}$ then one sees that all expectation values using tensor products can be recovered with commuting observables. The converse question, however, is still open and is known as Tsirelson's problem~\cite{scholz08a,navascues12a,junge11,fritz12a}. 

For clarity, we start with the translation of the bipartite PPT constraint $\rho^{T_A} \geq 0$ in commutative terms. We employ the simplified notions $A_i,B_j,\ldots$ introduced in the main text for measurement operators, and $P=\tr[\dots A_i \dots B_j \ldots]$ to indicate the functional form of the observed probability distribution.

\begin{definition}
i) A bipartite probability distribution $P$ has a \emph{tensor product PPT quantum representation} if there exists a real state $\rho_{AB}$, local real measurements operators $A_i, B_j$ such that $P=\tr[\rho_{AB} A_i \otimes B_j]$ and that  $\rho_{AB} = \rho_{AB}^{T_A}$ is PPT invariant. 

ii) A bipartite distribution $P$ has a \emph{commutative PPT quantum representation} if there exists a real state $\tilde \rho$, commuting real measurement operators $\tilde A_i, \tilde B_j$, $[\tilde A_i, \tilde B_j] = 0$, such that $P=\tr[\tilde \rho \tilde A_i \tilde B_j]$ and for all possible products, indexed by $i=(i_1,i_2,\dots,i_n)$, $\tilde{A}(i) = \tilde A_{i_1} \tilde A_{i_2} \dots \tilde A_{i_n}$, and similarly for $\tilde B$, it holds 
\begin{equation}
\label{eq:ppt_comm}
\tr[\tilde \rho \tilde A(i) \tilde B(j)] = \tr[ \tilde \rho \tilde A(i^T) \tilde B(j)]
\end{equation}
with $i^T=(i_n,i_{n-1},\dots,i_1)$.
\end{definition}

\begin{proposition}\label{Proposition:TensorProduct}
Any bipartite probability distribution which has a tensor product PPT quantum representation also has a commutative PPT quantum representation. The converse holds at least if the commutative PPT quantum representation is finite-dimensional.
\end{proposition}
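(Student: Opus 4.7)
The plan is to prove the two implications separately: the forward direction is a direct construction, while the converse (under the finiteness assumption) is essentially the tensor-product half of Tsirelson's problem combined with a careful bookkeeping argument to transfer the PPT constraint.

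\textbf{Forward direction (tensor PPT $\Rightarrow$ commutative PPT).} Given a tensor product representation $(\rho_{AB},A_i,B_j)$ with $\rho_{AB}=\rho_{AB}^{T_A}$ and real data, I would set $\tilde\rho:=\rho_{AB}$, $\tilde A_i:=A_i\otimes\mathbbm{1}_B$, $\tilde B_j:=\mathbbm{1}_A\otimes B_j$. Commutation $[\tilde A_i,\tilde B_j]=0$ and the correct single-product expectation values are immediate, and reality of $\tilde\rho$, $\tilde A_i$, $\tilde B_j$ is inherited. To verify equation (7), I would use the standard identity $\tr[\rho_{AB}\,X\otimes Y]=\tr[\rho_{AB}^{T_A}\,X^T\otimes Y]$ together with $\rho_{AB}=\rho_{AB}^{T_A}$, plus the fact that real Hermitian factors satisfy $A(i)^T=A_{i_n}\cdots A_{i_1}=A(i^T)$. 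This turns $\tr[\tilde\rho\,\tilde A(i)\tilde B(j)]=\tr[\rho_{AB}\,A(i)\otimes B(j)]$ into $\tr[\rho_{AB}\,A(i^T)\otimes B(j)]=\tr[\tilde\rho\,\tilde A(i^T)\tilde B(j)]$, which is exactly (7).

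\textbf{Converse under finite-dimensionality.} Here I would invoke the finite-dimensional decomposition for two commuting $*$-subalgebras of $B(\mathcal{H})$ (a consequence of Wedderburn's theorem): there is a unitary $U$ and a decomposition $\mathcal{H}=\bigoplus_k\mathcal{H}_A^k\otimes\mathcal{H}_B^k$ such that $U\tilde A_i U^\dag=\bigoplus_k A_i^k\otimes\mathbbm{1}$ and $U\tilde B_j U^\dag=\bigoplus_k\mathbbm{1}\otimes B_j^k$, and $U\tilde\rho U^\dag$ may be taken block-diagonal as $\sum_k p_k\rho_k$ (the off-diagonal blocks do not contribute to any $\tr[\tilde\rho\,\tilde A(i)\tilde B(j)]$). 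Embedding everything into $\mathcal{H}_A\otimes\mathcal{H}_B$ with $\mathcal{H}_A=\bigoplus_k\mathcal{H}_A^k$, $\mathcal{H}_B=\bigoplus_k\mathcal{H}_B^k$, and setting $A_i:=\bigoplus_k A_i^k$, $B_j:=\bigoplus_k B_j^k$ together with $\rho_{AB}$ the corresponding direct sum, reproduces the probabilities as ordinary tensor expectation values. To transfer PPT, I would run the identity of the forward direction in reverse: (7) translates to $\tr[(\rho_{AB}-\rho_{AB}^{T_A})\,A(i)\otimes B(j)]=0$ for every pair of words. Within each block the $A_i^k$ generate $B(\mathcal{H}_A^k)$ and the $B_j^k$ generate $B(\mathcal{H}_B^k)$, hence the products $A(i)\otimes B(j)$ linearly span the subspace of $B(\mathcal{H}_A\otimes\mathcal{H}_B)$ in which both $\rho_{AB}$ and $\rho_{AB}^{T_A}$ are supported; this forces $\rho_{AB}=\rho_{AB}^{T_A}$. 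Reality, if lost during the block decomposition, is restored by a final application of Proposition~1, which leaves the PPT invariance intact.

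\textbf{Main obstacle.} The difficult step is the converse, specifically the interaction between the algebraic block decomposition and the PPT/reality constraints. Two things need careful verification: first, that the block decomposition is compatible with the support of $\tilde\rho$, so that the embedded $\rho_{AB}$ is genuinely block-diagonal and the $T_A$ transpose preserves this structure; second, that the identity (7), which only controls monomials in the measurement operators, carries enough linear information to force $\rho_{AB}=\rho_{AB}^{T_A}$ rather than merely a weaker relation on a subalgebra. The generation property of the $A_i^k$ on each factor handles the second point. A secondary nuisance is that over $\mathbb{R}$ the structure theorem can produce real/complex/quaternionic factors; I would avoid this by first working over $\mathbb{C}$ to obtain a tensor PPT representation and then re-imposing reality through Proposition~1, which is harmless since it preserves the PPT invariance.
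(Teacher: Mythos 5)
Your forward direction is exactly the paper's argument and is fine. The converse, however, contains a genuine gap in the PPT-transfer step. The structure theorem for two commuting finite-dimensional $*$-algebras gives $\mathcal{\tilde H}=\bigoplus_k\left(\bigoplus_l\mathcal{H}_{A,kl}\otimes\mathcal{H}_{B,kl}\right)\otimes\mathcal{K}_k$, where the multiplicity space $\mathcal{K}_k$ carries the commutant of the algebra $\mathcal{Q}$ generated by all words $\tilde A(i)\tilde B(j)$; your decomposition silently drops $\mathcal{K}_k$. This is not cosmetic: the words span only $\bigoplus_k \mathbbm{B}(\mathcal{H}_{AB,k})\otimes\mathbbm{1}_{\mathcal{K}_k}$, a proper subalgebra, so the identities $\tr[(\rho_{AB}-\rho_{AB}^{T_A})\,A(i)\otimes B(j)]=0$ constrain only the component of $\rho_{AB}-\rho_{AB}^{T_A}$ inside that subalgebra. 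Your claim that the words span ``the subspace in which $\rho_{AB}$ and $\rho_{AB}^{T_A}$ are supported'' is false in general: $\tilde\rho$ need not lie in $\mathcal{Q}$, and wherever you place $\mathcal{K}_k$ in the bipartition, neither $\rho_{AB}=\rho_{AB}^{T_A}$ nor even $\rho_{AB}^{T_A}\geq 0$ follows for the state you actually construct. Your parenthetical that ``the off-diagonal blocks do not contribute'' disposes of the $k$-off-diagonal terms but not of the multiplicity.

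The repair is precisely what the paper does: take as the tensor-product state not $U\tilde\rho\, U^\dag$ itself but the reduced operators $p_{kl}\tilde\rho_{AB,kl}=\tr_{\mathcal{K}_k}(\Pi_{kl}\tilde\rho\,\Pi_{kl})$, glued together with classical flag registers as in Eq.~(\ref{eq:rhoAB_tp}). These still reproduce all correlations, and since the words do span the full matrix algebra on each block $\mathcal{H}_{A,kl}\otimes\mathcal{H}_{B,kl}$, your linear-span argument then legitimately yields $\tilde\rho_{AB,kl}=\tilde\rho_{AB,kl}^{T_A}\geq 0$ blockwise. (The paper instead checks $\tr[\tilde\rho_{AB,kl}^{T_A}Q_{kl}Q_{kl}^\dag]\geq 0$ directly from positivity of $\tilde\rho$ together with Eq.~(\ref{eq:ppt_comm}); once the reduction is in place the two routes are equivalent.) Note that the remark the authors make after Proposition~\ref{Proposition:PPT} shows why this cannot be glossed over: $\mathcal{K}_k$ is exactly what allows a \emph{pure} commutative PPT state to correspond to a \emph{mixed}, hence possibly entangled, tensor-product PPT state. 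An argument that discards $\mathcal{K}_k$ would conclude that the tensor-product state is pure and PPT, hence separable, which is the false shortcut to the Peres conjecture that the authors explicitly warn against.
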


\begin{proof} 
Let us start with the direction i) to ii). The formulation i) is already simplified using Proposition~\ref{Proposition:RealStuff}. If we now interpret $\tilde A = A \otimes \mathbbm{1}, \tilde B = \mathbbm{1} \otimes B$ and $\tilde \rho = \rho_{AB}$ then these operators naturally fulfil the requirements of being real, that their commutator vanishes and that they provide the correct expectation values. If one employs the PPT invariance $\rho_{AB} = \rho_{AB}^{T_A}$ and $A_i = A_i^T$ one obtains the condition given by Eq.~(\ref{eq:ppt_comm}) via
\begin{align}
\nonumber
\tr[\tilde \rho \tilde{A}(i) \tilde{B}(j)] &= \tr[ \rho_{AB} A(i)\! \otimes\! B(j)] = \tr[ \rho_{AB}^{T_A} A(i) \! \otimes \! B(j)] \\ 
\nonumber
&=  \tr[ \rho_{AB} (A_{i_1} A_{i_2} \dots A_{i_n})^T \!\otimes \!B(j)] \\
\nonumber
&= \tr[ \rho_{AB} (A_{i_n} A_{i_{n-1}}\! \dots A_{i_1})\! \otimes B(j)] \\ 
\nonumber
&= \tr[ \rho_{AB} A(i^T)\! \otimes\! B(j)] \\
&= \tr[\tilde \rho \tilde{A}(i^T) \tilde{B}(j)].
\end{align}
This finishes the proof in the direction i) to ii).

Let us now turn to the converse. Thus suppose that $\tilde \rho$ and $\tilde A_i, \tilde B_j$, all acting on the finite-dimensional Hilbert space $\mathcal{\tilde H}$, are given as in ii). From these operators we now construct local measurement operators and a bipartite quantum state which is PPT; the statement that they even have the extra symmetries follows from point a) of Proposition~\ref{Proposition:RealStuff}. This construction is very analogous to the corresponding one of the standard Tsirelson problem. However, in order to show that the conditions given by Eq.~(\ref{eq:ppt_comm}) finally prove that the constructed bipartite state is PPT one has to keep track which operators can be built up by products of $\tilde A_i, \tilde B_j$ and linear combinations of them, or, more precisely, by the operators set 
\begin{equation}
\mathcal{Q} = \big\{ Q \in \mathbbm{B}(\mathcal{\tilde H}) \big| Q = \sum_{ij} c_{ij} \tilde A(i) \tilde B(j), c_{ij} \in \mathbbm{C} \big \}.
\end{equation}
From the given finite-dimensional commuting Hermitian operators $\tilde A_i, \tilde B_j$ one can infer the following decompositions
\begin{align}
\label{eq:decomp_H1}
\mathcal{\tilde H} &= \oplus_k \mathcal{H}_{AB,k}  \otimes \mathcal{K}_{k} \\ 
\label{eq:decomp_H2}
&= \oplus_k ( \oplus_l \mathcal{H}_{A,kl} \otimes \mathcal{H}_{B,kl} ) \otimes \mathcal{K}_{k}, 
\end{align}
and representations
\begin{align}
\label{eq:decomp_algebra}
\mathcal{Q} &= \oplus_k \mathbbm{B}( \mathcal{H}_{AB,k} ) \otimes \mathbbm{1},\\
\label{eq:A}
\tilde A_i &= \oplus_k A_{i,k} \otimes \mathbbm{1} =\oplus_k ( \oplus_l \tilde A_{i,kl} \otimes \mathbbm{1} ) \otimes \mathbbm{1},  \\
\label{eq:B}
\tilde B_j &= \oplus_k B_{j,k} \otimes \mathbbm{1} =\oplus_k ( \oplus_l \mathbbm{1} \otimes \tilde B_{j,kl}) \otimes \mathbbm{1}. 
\end{align}
These statements can be inferred for instance from Theorem A.7 of Ref.~\cite{doherty08a}: The original structure given by Eq.~(\ref{eq:decomp_H1}) comes from the decomposition of (the finite-dimensional $C^*$-algebra) $\mathcal{Q}$. The tensor product structure of $\mathcal{H}_{AB,k}$, included in Eq.~(\ref{eq:decomp_H2}), originates from the same argument as in the standard Tsirelson problem, applied to the operators $\tilde A_{i,k}, \tilde B_{j,k}$ acting on $\mathcal{H}_{AB,k}$ which still commute. (In addition to Theorem A.7 of Ref.~\cite{doherty08a} one needs to know that $\tilde B_{j,k}$ are elements of the commutant of $C^*$-algebra generated by $\{ \tilde A_{i,k} \}$). 

Since it will become important shortly, let us stress the meaning of Eq.~(\ref{eq:decomp_algebra}): It states that any operator of this form is in the set $\mathcal{Q}$. Note that this structure includes, in particular, all operators $\oplus_k [\oplus_l Q_{kl}] \otimes \mathbbm{1}$ with $Q_{kl}$ arbitrary. Thus, also the very special operators where all of these operators $Q_{kl}$ are zero except for one particular pair, maybe $k=k^\prime,l=l^\prime$. Hence by appropriately chosen coefficients $c_{ij}[Q_{k^\prime l^\prime}]$ it is possible to build up 
\begin{equation}
\label{eq:Q_kl}
\sum_{ij} c_{ij}[Q_{k^\prime l^\prime}] \tilde A(i) \tilde B(j)= Q_{k^\prime l^\prime} \otimes \mathbbm{1}_{\mathcal{K}_{k^\prime}}
\end{equation}
with $Q_{k^\prime l^\prime}$ being an arbitrary operator acting on $\mathcal{H}_{AB,kl}$.

Next let us construct the bipartite state. In the following we denote by $\Pi_{kl}$ the projections onto $\mathcal{H}_{AB,kl}\otimes \mathcal{K}_k$. From the structure of the Hilbert space and the measurements one obtains the identity
\begin{equation}
\label{eq:mixture}
\tr[\tilde \rho \tilde A_i \tilde B_j]\! =\! \sum_{kl} \tr_{AB,kl}[\tr_{\mathcal{K}_k}(\Pi_{kl} \tilde \rho \Pi_{kl}) \tilde A_{i,kl} \otimes \tilde B_{j,kl}].
\end{equation}
If one defines $p_{kl} \tilde \rho_{AB,kl}= \tr_{\mathcal{K}_k}(\Pi_{kl}\tilde \rho \Pi_{kl})$ one can interpret Eq.~(\ref{eq:mixture}) as a classical mixture of bipartite states on different subsystems. Via appropriate local auxiliary system one can combine this into a single bipartite system 
\begin{align}
\label{eq:rhoAB_tp}
\rho_{AB} &= \sum p_{kl} \tilde \rho_{AB,kl} \otimes \ket{kl,kl}_{A^\prime B^\prime}\bra{kl,kl}, \\
A_i &= \sum \tilde A_{i,kl} \otimes \ket{kl}_{A^\prime}\bra{kl},\\
B_j &= \sum \tilde B_{j,kl} \otimes \ket{kl}_{B^\prime}\bra{kl}, 
\end{align}
such that $\tr[\tilde \rho \tilde A_i \tilde B_j] = \tr[\rho_{AB} A_i \otimes B_j]$ following Eq.~(\ref{eq:mixture}). That $\rho_{AB}$ is indeed a valid quantum state, while $A_i, B_j$ are correct operators describing measurements follows from their structure. 

Thus we are left to show that $\rho_{AB}$ is PPT, which, by its form given by Eq.~(\ref{eq:rhoAB_tp}), is equivalent to $\tilde\rho_{AB,kl}$ being PPT for all $k,l$. A finite-dimensional operator $\tilde \rho_{AB,kl}^{T_A} \geq 0$ is positive semidefinite if and only if for all operators $Q_{kl} \in \mathbbm{B}(\mathcal{H}_{AB,kl})$ it holds that $\tr[\tilde \rho_{AB,kl}^{T_A} Q_{kl}Q_{kl}^\dag] \geq 0$~\cite{miranowicz09a}. In the remaining we show that this inequality holds due to $\tr[\tilde \rho QQ^\dag] \geq 0$ for all operators $Q\in\mathcal{Q}$ and the extra conditions given by Eq.~(\ref{eq:ppt_comm}). Here it is important to stress that $\mathcal{Q}$ indeed includes all operators $Q_{kl} \in \mathbbm{B}(\mathcal{H}_{AB,kl})$ as remarked previously. Using $Q \in \mathcal{Q}$ with $c_{ij}=c_{ij}[Q_{k^\prime l^\prime}]$ as given in Eq.~(\ref{eq:Q_kl}) for fixed $k^\prime,l^\prime$, $\tilde A_{kl}(i) = \tilde A_{i_1,kl}\dots \tilde A_{i_n,kl}$ and a similar shorthand for $\tilde B_{kl}(j)$, one obtains
\begin{align}
\nonumber
\tr[&\tilde \rho QQ^\dag] =\sum  c_{ij} c^*_{uv} \tr[\tilde \rho \tilde A(i)\tilde B(j)(\tilde A(u) \tilde B(v))^\dag] \\ 
\nonumber&=\sum c_{ij} c^*_{uv} \tr[\tilde \rho \tilde A(i)\tilde A(u^T)\tilde B(j)\tilde B(v)^\dag]  \\
\nonumber&= \!\!\!\!\!\!\!\!\!\!\!\!\! \stackrel{\textrm{Eq.}~(\ref{eq:ppt_comm})}{\phantom{=}} \sum c_{ij} c^*_{uv} \tr[\tilde \rho \tilde A(u)\tilde A(i^T)\tilde B(j)\tilde B(v)^\dag] \\
\nonumber&= \sum c_{ij} c^*_{uv} \tr[\tilde \rho (\tilde A(i)\tilde A(u)^\dag)^T \tilde B(j)\tilde B(v)^\dag ] \\
\nonumber&= \sum c_{ij} c^*_{uv} p_{kl}\\
\nonumber
& \:\: \times \tr_{AB,kl}[ \tilde \rho_{AB,kl}  (\tilde A_{kl}(i) \tilde A_{kl}(u)^\dag )^T \otimes \tilde B_{kl}(j) \tilde B_{kl}(v)^\dag  ] \\
\nonumber&= \sum c_{ij} c^*_{uv} p_{kl}\\
\nonumber
& \:\: \times \tr_{AB,kl}[ \tilde \rho_{AB,kl}^{T_A}  \tilde A_{kl}(i) \tilde A_{kl}(u)^\dag \otimes \tilde B_{kl}(j) \tilde B_{kl}(v)^\dag  ] \\\label{eq:pos_pt}
&= p_{k^\prime l^\prime} \tr_{AB,k^\prime l^\prime}[ \tilde \rho_{AB,k^\prime l^\prime}^{T_{A}} Q_{k^\prime l^\prime} Q_{k^\prime l^\prime}^\dag ] \geq 0.
\end{align}
Here we employ the following: Because $\tilde A_i$ commutes with every $\tilde B_j$ this property naturally extends to products $[\tilde A(i),\tilde B(j)]=0$. Moreover, since the measurement operators $\tilde A_i$ can be taken real and symmetric, it follows that $\tilde A(i)^\dag = \tilde A(i^T)$ and similarly $\tilde A(k) \tilde A(i^T)=[\tilde A(i)\tilde A(k)^\dag]^T$. As indicated the identity given by Eq.~(\ref{eq:ppt_comm}) is employed in the third line. Afterwards one uses once more the block decomposition as already shown previously, and the identity $\tr[ X Y^{T_A}] = \tr[ X^{T_A} Y]$. The final inequality stems from the positivity of the original density operator $\tilde \rho$. This finishes the proof of Proposition~\ref{Proposition:TensorProduct}.
\end{proof}

Let us now see how the property of a PPT mixture can be cast into a commutative version. The equivalence of both formulations in the finite case is shown similarly as in the bipartite case.

\begin{definition}
i) An $n$-partite probability distribution $P$ has a \emph{tensor product PPT mixture quantum representation} if there exists a real $n$-partite state $\rho$, local real measurement operators $A_i^1,A_j^2,\ldots,A^n_k$ such that $P=\tr[\rho A^1_i \otimes A_j^2 \otimes \ldots  \otimes A_k^n]$ and $\rho = \sum_m p_m\rho_m$ being a convex combination of real, PPT invariant states $\rho_m = \rho_m^{T_m}$ for all bipartitions $m$.

 ii) An $n$-partite probability distribution $P$ has a \emph{commutative PPT mixture quantum representation} if there exists a real state $\tilde \rho$, commuting real measurement operators $\tilde A^1_i, \tilde A^2_j,\ldots, \tilde A^n_k$ such that $P=\tr[\tilde \rho \tilde A^1_i \tilde A^2_j\ldots \tilde A_k^n]$ and $\rho = \sum_{m} p_m \rho_m$ is a convex combination of real ``PPT states'' $\rho_m$ for all possible formal bipartitions $m$, \ie, each state $\rho_m$ must satisfy that for all possible products of operators from partition $m$, indexed by $i^{(m)} = (i_1^{k_1},i_2^{k_2},\ldots,i_r^{k_r})$ with $k_s \in m$ for $s=1,\ldots,r$, $\tilde A(i^{(m)})=\tilde A_{i_1}^{k_1}\tilde A_{i_2}^{k_2}\ldots\tilde A_{i_r}^{k_r}$ and similar for products for its complement $m_c$ it holds 
\begin{equation}
\label{eq:ppt_comm_multi}
\tr[\tilde \rho_m \tilde A(i^{(m)}) \tilde A(j^{(m_c)})] = \tr[ \tilde \rho_m \tilde A( i^{(m)T}) \tilde A(j^{(m_c)})]
\end{equation}
with $i^{(m)T}=(i_r^{k_r},i_{r-1}^{k_{r-1}},\dots,i^{k_1}_1)$.
\end{definition}

\begin{proposition}
Any probability distribution which has a tensor product PPT mixture quantum representation also has a commutative PPT mixture quantum representation. The converse holds at least if the commutative PPT mixture quantum representation is finite-dimensional.
\end{proposition}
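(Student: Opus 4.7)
The plan is to mirror the structure of the proof of Proposition~\ref{Proposition:TensorProduct}, exploiting the fact that the PPT-mixture condition is defined bipartition-by-bipartition, so that each summand $\tilde\rho_m$ (resp.\ $\rho_m$) can essentially be treated as in the bipartite case once one groups the $n$ mutually commuting local algebras into the two sides of the bipartition $m\,|\,m_c$. I would first reduce to real operators through Proposition~\ref{Proposition:RealStuff} (the PPT-mixture part of statement a), so that throughout the argument $\tilde A^k_i$, $\rho_m$, $\tilde\rho_m$ are all real and symmetric, and in particular $\tilde A(i)^{\dagger}=\tilde A(i^T)$ remains available.

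For the direction (i)$\Rightarrow$(ii), I would use the standard embedding $\tilde A^k_i=\mathbbm{1}\otimes\cdots\otimes A^k_i\otimes\cdots\otimes\mathbbm{1}$ and $\tilde\rho=\rho$, $\tilde\rho_m=\rho_m$. Commutation across distinct parties is automatic, expectation values are preserved by construction, and the commutative PPT condition Eq.~\eqref{eq:ppt_comm_multi} is verified exactly as in the bipartite case: using $\rho_m=\rho_m^{T_m}$ together with the reality and symmetry of the $A^k_i$ for $k\in m$ one obtains
\begin{align*}
\tr[\tilde\rho_m\tilde A(i^{(m)})\tilde A(j^{(m_c)})]
&=\tr[\rho_m^{T_m}\,A(i^{(m)})\otimes A(j^{(m_c)})]\\
&=\tr[\rho_m\,A(i^{(m)T})\otimes A(j^{(m_c)})],
\end{align*}
which is the required identity.

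The harder direction (ii)$\Rightarrow$(i) is where the main effort sits. Given a commutative PPT-mixture representation on a finite-dimensional Hilbert space, I would treat each $\tilde\rho_m$ in turn. For fixed $m$, the two subsets $\{\tilde A^k_i:k\in m\}$ and $\{\tilde A^k_j:k\in m_c\}$ are two mutually commuting sets of real commuting projectors, so one can apply verbatim the Hilbert space decomposition of Proposition~\ref{Proposition:TensorProduct} (based on Theorem~A.7 of Ref.~\cite{doherty08a}) to split $\tilde{\mathcal H}=\oplus_k \mathcal H_{(m)(m_c),k}\otimes\mathcal K_k$ and further $\mathcal H_{(m)(m_c),k}=\oplus_l \mathcal H_{(m),kl}\otimes\mathcal H_{(m_c),kl}$. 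Within each tensor factor $\mathcal H_{(m),kl}$ one still has $|m|$ mutually commuting local algebras; iterating the same decomposition theorem $|m|-1$ more times yields a further tensor product structure $\mathcal H_{(m),kl}\cong\bigotimes_{s\in m}\mathcal H_{s,kl}$, and similarly on the $m_c$ side, so that each block carries a full $n$-partite tensor product structure compatible with the $\tilde A^k_i$. The bipartite PPT argument of Proposition~\ref{Proposition:TensorProduct}, applied to each block and using Eq.~\eqref{eq:ppt_comm_multi} as the analogue of Eq.~\eqref{eq:ppt_comm}, then shows positivity under the partial transpose across $m$ via the same chain of manipulations culminating in Eq.~\eqref{eq:pos_pt}.

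Finally, I would glue the per-bipartition constructions together: on each block one obtains an $n$-partite state PPT across $m$, and as in Eq.~\eqref{eq:rhoAB_tp} the classical mixture over blocks can be absorbed into local ancillae $\ket{kl}_{A^\prime_s}$, producing a single $n$-partite state $\rho_m$ on a common, enlarged $n$-partite Hilbert space together with local measurement operators $A^s_i$. The convex combination $\rho=\sum_m p_m \rho_m$, together with additional ancilla labels distinguishing the different $m$, yields a tensor product PPT-mixture quantum representation reproducing $P$. The main obstacle I expect is bookkeeping in the iterated decomposition step: one must verify that the secondary decomposition of $\mathcal H_{(m),kl}$ based on the commuting family $\{\tilde A^k_i:k\in m\}$ can be carried out jointly so that all parties simultaneously inherit an on-block tensor factor, and that the resulting local measurement operators on each party are defined consistently across blocks and across the different bipartitions $m$; this is essentially a repeated application of the commutant structure theorem used in the bipartite case, but it is the step that has no direct shortcut from Proposition~\ref{Proposition:TensorProduct}.
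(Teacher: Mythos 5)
Your argument is correct and reaches the same two key ingredients as the paper --- the reduction of each bipartition's PPT condition to the bipartite computation culminating in Eq.~(\ref{eq:pos_pt}), via the structure theorem of Ref.~\cite{doherty08a} --- but it is organized differently in the converse direction, and the difference is worth noting because it bears exactly on the ``main obstacle'' you flag. The paper performs a \emph{single} global decomposition of $\mathcal{\tilde H}$ into $n$-fold tensor-product blocks, Eq.~(\ref{eq:decompHmulti}), determined once and for all by the algebra generated by all the commuting measurement operators $\tilde A^s_i$; since this algebra does not depend on the bipartition $m$, every summand $\tilde\rho_m$ is compressed onto the \emph{same} blocks and measured by the \emph{same} local operators $A^s_i=\sum\tilde A^s_{i,kl}\otimes\ketbra{kl}{kl}$, so the PPT-mixture decomposition of each block state $\tilde\sigma_{kl}$ is simply inherited from $\tilde\rho=\sum_m\tilde\rho_m$ and no gluing over $m$ is needed. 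Your per-bipartition decomposition followed by ancilla labels $\ket{m}$ also works --- each branch $\rho_m\otimes\proj{m\cdots m}$ stays PPT across $m$ and the flagged sum is still a PPT mixture reproducing $P$ --- but it obscures the fact that the consistency worry you raise dissolves: the block structure and the local operators are canonical (they come from the central decomposition of the measurement algebra and its commutants), so the iterated bipartite decompositions for different $m$ necessarily agree with the single $n$-partite one. In short, your route is valid but does extra work; adopting the paper's single decomposition up front removes the bookkeeping step you identified as the hard part.
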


\begin{proof} 
The direction from $i)$ to $ii)$ holds by similar arguments as given in the bipartite case. 

For the converse direction one proceeds analogously as in the proof of Proposition~$2$. From the given measurement operators one can obtain, similar to Eqs.(\ref{eq:decomp_H1})-(\ref{eq:B}), a decomposition of the underlying Hilbert space given by
\begin{equation}
\label{eq:decompHmulti}
\mathcal{H} = \oplus_k (\oplus_l \mathcal{H}_{A_1,kl} \otimes \mathcal{H}_{A_2,kl} \otimes \ldots \otimes \mathcal{H}_{A_n,kl} ) \otimes \mathcal{K}_{k},
\end{equation}
its measurement operators and respective operator set $\mathcal{Q}$, \ie, the set of all operators generated by linear combinations of products of the measurements~\cite{doherty08a}. This shows that all measurements $\tilde A_i^s$ from system $s$ act only non-trivial on the part $\mathcal{H}_{A_s,kl}$, in analogy to the bipartite case as given by Eqs.~(\ref{eq:A}),(\ref{eq:B}). Via the projector $\Pi_{kl}$ onto the subspace $\mathcal{H}_{A_1\ldots A_n,kl}\otimes \mathcal{K}_k$ one arrives again at the tensor product form  
\begin{align}
\nonumber
\tr &[\tilde \rho \tilde A^1_i \ldots \tilde A_j^n] \\ &= \sum_{kl} p_{kl} \tr_{A_1\dots A_n,kl} [ \tilde \sigma_{kl} \tilde A_{i,kl}^1 \otimes \ldots \otimes A_{j,kl}^n ].
\end{align}
with $p_{kl} \tilde \sigma_{kl} = p_{kl} \tilde \sigma_{A_1\dots A_n,kl} = \tr_{\mathcal{K}_{k}} [ \Pi_{kl} \tilde \rho \Pi_{kl}]$. Via auxiliary states this classical mixture can be turned into a single multipartite state, \ie, $\rho = \sum_{kl} p_{kl} \tilde \sigma_{kl}v \otimes \ket{kl}\bra{kl}^{\otimes n}$ and measurements $A_i^s = \sum_v \tilde A_{i,kl}^s \otimes \ket{kl}\bra{kl}$ for each system $s$. 

At last one needs to show that $\rho$ is a PPT mixture, which amounts to verifying that each $\tilde \sigma_{kl}$ is a PPT mixture. This decomposition is given by the original expansion $\tilde \rho = \sum_m \tilde \rho_m$ into real (in this case un-normalized) states $\tilde \rho_m$ which satisfy the property given by Eq.~(\ref{eq:ppt_comm_multi}). If we set $\rho_{m,kl} = \tr_{\mathcal{K}_{k}} [ \Pi_{kl} \tilde \rho_m \Pi_{kl}] $ then one obtains a decomposition
\begin{equation}
p_{kl} \tilde \sigma_{kl} = \sum_m \tilde \rho_{m,kl},
\end{equation} 
which verifies that $\tilde \sigma_{kl}$ is a PPT mixture if one can show that $\tilde \rho_{m,kl}$ is PPT with respect to bipartition $m$. Since we can interpret $\tilde \rho_{m,kl}$ as a bipartite system on $m$ and its complement $\bar m$ this boils down to the bipartite argument: The condition $\tilde \rho_{m,kl}^{T_m} \geq 0$ is equivalent to $\tr[ \tilde \rho_{m,kl}^{T_m} Q_{kl}Q_{kl}^\dag] \geq 0$ for all operators $Q_{kl} \in \mathbbm{B}(\mathcal{H}_{A_1\ldots A_n,kl})$. Since any operator $Q_{kl}$ can again be written as a linear combination of products of the measurement operators, the identities given by Eq.~(\ref{eq:ppt_comm_multi}) verify $\tr[ \tilde \rho_{m,kl}^{T_m} Q_{kl}Q_{kl}^\dag] = \tr[ \tilde \rho_{m,kl} Q_{kl}Q_{kl}^\dag]\geq 0$, where the inequality results from the positivity of $\tilde \rho_m $. This concludes the proof.~\hfill 
\end{proof}

Finally let us state the convergence result. As already pointed out this is a corollary of the original convergence proof for the NPA hierarchy~\cite{navascues08a}. In this part we use the notation $\chi^{(\ell)}$ to refer to the matrix of moments of level $\ell$, \ie, having for each party a product of up to $\ell$ local operators in the construction. We first prove the convergence of the PPT hierarchy for the bipartite case. This proof then directly generalizes to a PPT hierarchy for a single bipartition $m$ in the multipartite case, which finally helps to prove the convergence of the PPT mixture in the end. Note that the condition $\chi^{(\ell)}[P_,u]_{\tr}=1$ does not appear since we explicitly talk about probability distributions.

\begin{proposition}\label{Proposition:PPT}
If a bipartite distribution $P$ satisfies $\chi^{(\ell)}[P,u] = \chi^{(\ell)}[P,u]^{T_{\bar A}} = \chi^{(\ell)}[P,u]^{T} \geq 0$ via suitable $u$ for all levels $\ell$, then it has a (potentially infinite-dimensional) commutative PPT quantum representation.
\end{proposition}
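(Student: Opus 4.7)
The plan is to piggyback on the convergence of the NPA hierarchy~\cite{navascues08a}, tracking the partial-transpose positivity as a second state condition on the same moment data, and then symmetrizing to obtain the PPT identity~(\ref{eq:ppt_comm}). The starting point is that at level $\ell$ our leaner moment matrix encodes the Gram form $\omega(X^{*}X)$ for $X$ ranging over polynomials of the type $\sum c_{ij}\,A(i)\,B(j)$, with $A(i)$ and $B(j)$ being $\ell$-fold products of generators. Since the universal $*$-algebra $\mathcal{A}$ with the projection identity and commutation $[A_i,B_j]=0$ relations is linearly spanned by such ordered products as $\ell\to\infty$, the condition $\chi^{(\ell)}[P,u]\geq 0$ at every $\ell$ amounts to the full NPA positivity $\omega(X^{*}X)\geq 0$ on all of $\mathcal{A}$. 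Applying the NPA compactness plus diagonal-subsequence argument together with the GNS construction yields, in the limit $\ell\to\infty$, a commutative quantum representation $(\tilde\rho,\tilde A_i,\tilde B_j)$ reproducing $P$. The symmetry $\chi^{(\ell)}=(\chi^{(\ell)})^{T}$ moreover makes the moments real, so (as in Proposition~\ref{Proposition:RealStuff}) the representation can be taken real and symmetric.

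Next, I would convert the partial-transpose positivity $\chi^{(\ell)}[P,u]^{T_{\bar A}}\geq 0$ into positivity of a second functional $\omega'$ on $\mathcal{A}$ defined by $\omega'(A(i)B(j))=\omega(A(i^{T})B(j))$, where $A(i^{T})$ reverses the order of $A$-letters inside the word. A short calculation in the real-symmetric case shows that, using the full symmetry $\chi^{(\ell)}=(\chi^{(\ell)})^{T}$, the partial-transpose Gram form coincides (after index relabeling) with $\omega'(X^{*}X)$ for $X$ of the above form. Hence $\omega'(X^{*}X)\geq 0$ at every level, and this inequality again survives the $\ell\to\infty$ limit, so $\omega'$ is a state. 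Because $A$-reversal acts trivially on single letters, $\omega'$ agrees with $\omega$ on the observed probabilities $P$.

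Finally, I would take the symmetric mixture $\omega^{\mathrm{sym}}=\tfrac{1}{2}(\omega+\omega')$, which is again a state on $\mathcal{A}$ reproducing $P$. Because reversal is an involution, a direct check shows
\[
\omega^{\mathrm{sym}}(A(i)B(j))=\tfrac{1}{2}\bigl[\omega(A(i)B(j))+\omega(A(i^{T})B(j))\bigr]=\omega^{\mathrm{sym}}(A(i^{T})B(j)),
\]
which is precisely the PPT identity~(\ref{eq:ppt_comm}). Applying GNS to $\omega^{\mathrm{sym}}$ then produces the desired commutative PPT quantum representation.

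The main obstacle lies in the NPA-style limit: extracting a consistent functional on the full infinite-dimensional algebra $\mathcal{A}$ from the sequence of finite moment matrices, while preserving the algebraic relations (projection, commutation, reality) in the limit. Once $\omega$ exists, the second positivity for $\omega'$ and the symmetrization reduce to algebraic manipulations that survive the limit verbatim, since each is a finite-dimensional condition on a finite submatrix at every level. The novelty over NPA is only that we track two positivities and a symmetry on the same moment data, a mild augmentation of the standard argument.
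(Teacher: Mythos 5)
Your proposal is correct, and its skeleton (NPA compactness/diagonal argument to get a limiting positive functional, a Gram-vector/GNS construction of real commuting projective measurements, reality from $\chi=\chi^{T}$) coincides with the paper's proof. Where you genuinely diverge is in how the partial-transpose condition is converted into Eq.~(\ref{eq:ppt_comm}). The hypothesis of the proposition is the equality chain $\chi^{(\ell)}=\chi^{(\ell)T_{\bar A}}=\chi^{(\ell)T}$, i.e.\ PPT \emph{invariance} of the moment matrix (which Proposition~\ref{Proposition:RealStuff}a shows can be assumed without loss of generality); the paper exploits this equality directly, reading off $\braket{st|\chi^{(\infty)}|vw}=\braket{vt|\chi^{(\infty)}|sw}$ and hence $\braket{e_{00}|e_{st}}=\braket{e_{s0}|e_{0t}}$ for the Gram vectors, which gives $\tr[\hat\rho\,\hat A(s)\hat B(t)]=\tr[\hat\rho\,\hat A(s^T)\hat B(t)]$ with no further work. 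You instead read the hypothesis as two positivity conditions, package $\chi^{T_{\bar A}}\geq 0$ as positivity of a second functional $\omega'=\omega\circ(T_A\otimes\mathrm{id})$, and symmetrize to $\omega^{\rm sym}=\tfrac12(\omega+\omega')$. This is a valid alternative (your index-relabeling identification of the partial-transpose Gram form with $\omega'(X^*X)$ checks out, reversal is an involution, and $\omega'$ agrees with $\omega$ on the data), and it buys a slightly stronger statement: convergence holds even if one only imposes $\chi^{(\ell)T_{\bar A}}\geq 0$ rather than equality. Under the hypothesis as actually stated, however, $\omega'=\omega$ and your symmetrization step is vacuous, so the extra machinery is not needed; the only point worth tightening is to state explicitly that $T_A$ (word reversal on the $A$-letters) is well defined on the universal algebra because it preserves the Hermiticity, projection and commutation relations, so that $\omega'$ is a bona fide functional to which GNS applies.
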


\begin{proof} 
In the following we use the label $s,t,v,w$ to denote sequences of measurement operators, \eg, $s=(s_1,s_2, \dots, s_n)$ with $A(s) = A_{s_1} A_{s_2} \dots A_{s_n}$. However, if we employ $i,j$ it refers to a single index, and $i=0$ or $j=0$ should correspond to the identity operator. The structure of $\chi^{(\ell)}$, as given by Eq.~($1$) for $\ell=1$, is such that the matrix elements correspond to expectation values
\begin{equation}
\braket{st|\chi^{(\ell)}|vw} = \tr\{\rho [A(s)\otimes B(t)][A(v)\otimes B(w)]^\dag\}.
\end{equation}
The identities given by the described partial information can be parsed as $\braket{st|\chi^{(\ell)}|vw} = \braket{s' t'|\chi^{(\ell)}|v' w'}$ if $A(s)A(v)^\dag \otimes B(t)B(w)^\dag = A(s')A(v')^\dag \otimes B(t')B(w')^\dag$ by the properties satisfied by operators, which are listed as $1)-3)$ in the main text. The two extra requirements provide the additional relations $\braket{st|\chi^{(\ell)}|vw} = \braket{vt|\chi^{(\ell)}|sw} = \braket{vw|\chi^{(\ell)}|st}$.

Next let us summarize the convergence idea of the NPA hierarchy~\cite{navascues08a}. Suppose that $\chi^{(\ell)}[P,u] = \chi^{(\ell)}[P,u]^{T_{\bar A}} = \chi^{(\ell)}[P,u]^{T}\geq 0$ exists for each level $\ell$, then there is also a limit $\chi^{(\infty)} \geq 0$ which satisfies all the listed identities. For this infinite-dimensional matrix one can associate a set of vectors $\{ \ket{e_{st}} \}$ with $\braket{st|\chi^{(\infty)}|vw}= \braket{e_{st}|e_{vw}}$. Via this set one now defines the measurements and the corresponding state. The measurements are $\hat A_i = \Proj (  \{ \ket{e_{st}} : s_1 = i \})$ where $\Proj$ stands for the projector onto the corresponding set. Similarly one defines $\hat B_j = \Proj (  \{ \ket{e_{st}} : t_1 = j \})$, while the state is given by $\hat \rho = \ket{e_{00}}\bra{e_{00}}$. As shown in detail in Ref.~\cite{navascues08a} these measurements satisfy $\hat A_i \ket{ e_{st} } = \ket{ e_{\tilde st}}$ with $\tilde s=(i,s)$ and similar for $\hat B_j$ due to the linear constraints. These choices reproduce the observed expectation values since
\begin{align}
\nonumber
P_{ij} &= \braket{e_{00}|e_{ij}} = \braket{e_{00}| \hat A_i \hat B_j | e_{00}} \\ & = \tr[ \ket{e_{00}}\bra{e_{00}} \hat A_i \hat B_j] = \tr[ \hat \rho \hat A_i \hat B_j].
\end{align}
Moreover, besides being true projectors, the operators $\hat A_i, \hat B_j$ fulfil the projector identity and commute. 

Now let us add the remaining properties. First, let us point out that since $\chi^{(\infty)}=\chi^{(\infty)T}$ the set $\{ \ket{e_{st}} \}$ can be chosen to be a set of real vectors. Therefore, the state and also the measurements are real symmetric, which implies in particular $\hat A(s)^\dag = \hat A(s^T)$. Thus we are left to show the property given by Eq.~(\ref{eq:ppt_comm}). However, this follows from the extra requirement  $\braket{st|\chi^{(\infty)}|vw} = \braket{vt|\chi^{(\infty)}|sw}$ since 
\begin{align}
\nonumber
\tr&[ \hat \rho \hat A(s) \hat B(t) ] =  \braket{ e_{00} | \hat A(s) \hat B(t) | e_{00}}  \\ 
\nonumber 
&= \braket{e_{00}| e_{st}} = \braket{e_{s0}| e_{0t}} = \braket{ e_{00} | \hat A(s)^\dag \hat B(t) | e_{00}}  \\ 
\label{eq:help1}
&= \braket{ e_{00} | \hat A(s^T) \hat B(t) | e_{00}}  = \tr[ \hat \rho \hat A(s^T) \hat B(t) ].
\end{align}
This concludes the convergence proof. \hfill 
\end{proof}
%\openbox

Let us remark that the proof works equivalently in the multipartite case if one is only interested in a single bipartition $m$: If an $n$-partite distribution $P$ satisfies $\chi^{(\ell)}[P,u] = \chi^{(\ell)}[P,u]^{T_{\bar m}} = \chi^{(\ell)}[P,u]^{T} \geq 0$ via suitable $u$ for all levels $\ell$, then it has a (potentially infinite-dimensional) commutative PPT quantum representation with respect to partition $m$. As for the bipartite case one uses the convergence statement in the multipartite case to show existence of a respective state $\hat \rho$ and commuting measurements $\hat A_i^1,\ldots,\hat A_k^n$. The additional linear requirements given by Eq.~(\ref{eq:ppt_comm_multi}) for $m$ are shown similarly as in Eq.~(\ref{eq:help1}) by using the identities $\chi^{(\ell)}[P,u] = \chi^{(\ell)}[P,u]^{T_{\bar m}}=\chi^{(\ell)}[P,u]^{T}$.

Before we continue we like to clarify another point: As seen in the proof of Proposition~\ref{Proposition:PPT} one always obtains a pure state $\tilde \rho = \ket{\psi}\bra{\psi}$ in the commutative version. However, since any pure PPT state of a tensor product Hilbert space is separable and thus possesses a LHV model, one could ask whether this implies that any (finite-dimensional) PPT state must necessarily satisfy all Bell inequalities. This would prove the Peres conjecture, at least, in the finite-dimensional case. However this is not true. Indeed, even if the state in the commutative version is pure, the related bipartite, tensor product state following the procedure outlined in the proof of Proposition~\ref{Proposition:TensorProduct} does not have to be. More precisely, this bipartite state $\rho_{AB}$ is given by Eq.~(\ref{eq:rhoAB_tp}), where the partial trace over $\mathcal{K}_k$ in $\rho_{AB,kl}=\tr_{\mathcal{K}_k}(\Pi_{kl}\ket{\psi}\bra{\psi} \Pi_{kl})$ results in a mixed PPT state. This extra system $\mathcal{K}_k$ can be considered as a purifying system of each bipartite state $\rho_{AB,kl}$ to $\Pi_{kl}\ket{\psi}$, which would then be pure if one translates it into commutative terms. Finally note that the purifying systems $\mathcal{K}_k$ would vanish if the commuting measurements $\tilde A_i$, $\tilde B_j$ would generate $\mathbbm{B}(\mathcal{\tilde H})$.

At last, we prove the convergence of the characterization of an $n$-partite distribution $P$ with an underlying PPT mixture, which shows that the hierarchy given by Eq.~\eqref{eq:ppt_genuine_tsirelson} is complete.

\begin{proposition}
If an $n$-partite distribution~$P$ satisfies $\chi^{(\ell)}[P,u]\! =\! \sum \chi^{(\ell)}[P_m\!,u_m]$~with $\chi^{(\ell)}[P_m,u_m]= \chi^{(\ell)}[P_m,u_m]^{T_{\bar m}} \!=\! \chi^{(\ell)}[P_m,u_m]^{T} \geq 0$ via suitable $P_m,u_m$ for all levels $\ell$, then it has a (potentially infinite-dimensional) commutative PPT mixture quantum representation.
\end{proposition}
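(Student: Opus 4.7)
The plan is to reduce this multipartite PPT-mixture convergence statement to the single-bipartition convergence result established just above (the remark following Proposition~\ref{Proposition:PPT}) and then assemble the per-bipartition representations into a single commutative structure via a direct sum.

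First I would normalize the summands. Set $p_m=\chi^{(\ell)}[P_m,u_m]_{\tr}$, which is independent of $\ell$ (consecutive moment matrices agree on lower-order entries), non-negative (since $\chi^{(\ell)}[P_m,u_m]\geq 0$), and satisfies $\sum_m p_m = \chi^{(\ell)}[P,u]_{\tr}=1$. Discarding any term with $p_m=0$, define $\tilde P_m = P_m/p_m$ and $\tilde u_m = u_m/p_m$. Positivity, full-transpose invariance, and partial-transpose positivity with respect to $\bar m$ are all preserved under positive rescaling, so $\chi^{(\ell)}[\tilde P_m,\tilde u_m]$ meets exactly the hypotheses of the single-bipartition remark at every level $\ell$. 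By extracting probabilities from the matrix entries one also checks that $\tilde P_m$ is itself a valid probability distribution.

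Next, for each $m$ separately, I would invoke the single-bipartition statement to obtain a (possibly infinite-dimensional) Hilbert space $\tilde{\mathcal H}_m$, a real state $\tilde\rho_m$ on $\tilde{\mathcal H}_m$, and real commuting measurement operators $\{\tilde A^s_{i,m}\}$ that reproduce $\tilde P_m$ and satisfy the linear identities of Eq.~(\ref{eq:ppt_comm_multi}) for the bipartition $m$.

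The main obstacle is that these per-bipartition representations live on different Hilbert spaces and use different measurements, while the target definition of a commutative PPT mixture demands one fixed state together with one fixed set of commuting measurements shared across all bipartitions. I would resolve this via a direct-sum assembly: let $\hat{\mathcal H}=\bigoplus_m \tilde{\mathcal H}_m$, define the combined real commuting measurements $\hat A^s_i = \bigoplus_m \tilde A^s_{i,m}$ (commutativity, reality, Hermiticity, and the projection identity are preserved blockwise), and set $\hat\rho = \sum_m p_m\,\hat\rho_m$ where $\hat\rho_m$ embeds $\tilde\rho_m$ into the $m$-th block and acts as zero on the others. Since each measurement operator is block-diagonal, one gets
\[
\tr[\hat\rho\,\hat A^1_{i}\cdots \hat A^n_{k}] \;=\; \sum_m p_m\,\tr[\tilde\rho_m\,\tilde A^1_{i,m}\cdots \tilde A^n_{k,m}] \;=\; \sum_m P_m \;=\; P,
\]
and the linear identities of Eq.~(\ref{eq:ppt_comm_multi}) for each $\hat\rho_m$ with the combined operators reduce to the identities already secured within the $m$-th block. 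This exhibits $P$ as a commutative PPT mixture and completes the argument. The only delicate point beyond bookkeeping is verifying that the normalization $\tilde P_m$ genuinely falls under the hypotheses of the single-bipartition remark; everything downstream is a routine direct-sum construction.
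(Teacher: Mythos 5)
Your proposal is correct and follows essentially the same route as the paper: invoke the single-bipartition remark after Proposition~\ref{Proposition:PPT} for each $m$, then glue the resulting representations together on a direct-sum Hilbert space with block-diagonal measurements and a block-diagonal state. The only cosmetic difference is that you normalize each $P_m$ before applying the remark and reinsert the weights $p_m$ afterwards, whereas the paper applies the remark directly to the un-normalized $P_m$; both handle the bookkeeping equivalently.
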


\begin{proof}
From the remark after Proposition~\ref{Proposition:PPT} we know that the conditions on $\chi^{(\ell)}[P_m,u_m]$ for all $\ell$ show that there exists an (un-normalized) distribution $P_m$ which has a commutative PPT quantum representation for the bipartition $m$. Let us refer to this representation as the state $\sigma_m$ and measurements $A_i^{1|m},A_j^{2|m},\ldots,A_k^{n|m}$. Note that here, because the probabilities $P_m$ are not normalized but we rather have $\sum_m P_m = P$, the state $\rho_m$ satisfies $\tr(\rho_m)=\chi[P_m,u_m]_{\tr}$. 

From this one can construct a commutative PPT mixture representation for the given $n$-partite distribution $P$ by using $\mathcal{H} = \oplus_m \mathcal{H}_m$, $\rho = \oplus_m \sigma_m$ and $A_i^s = \oplus_m A_i^{s|m}$ for each system $s$. This construction directly ensures positivity and that the measurements commute. A commutative PPT mixture decomposition of the state is then given by $\rho = \sum_m \rho_m$ with (un-normalized) states $\rho_m=\Pi_m \rho \Pi_m$ and $\Pi_m$ denoting the projector onto $\mathcal{H}_m$. That these states $\rho_m$ indeed satisfy the relations given by Eq.~(\ref{eq:ppt_comm_multi}) follows from the corresponding identities on $\sigma_m$ since $\rho_m A_i^s=\sigma_m A_i^{s|m}$ for each system $s$ and similarly for products of more operators. This concludes the proof. 
\end{proof}

\end{document}